\documentclass[times, 10pt,twocolumn]{article} 
\pdfoutput=1
\usepackage{latex8}
\usepackage{times}
\usepackage[latin2]{inputenc}

\usepackage{amssymb}
\usepackage{amsmath}
\usepackage{comment}
\usepackage{graphicx}

\author{
  Eryk~Kopczyński\thanks{Supported by the Polish government grant~no.~N206~008~32/0810}
  \\
  Institute of Informatics, Warsaw University \\
  \texttt{erykk@mimuw.edu.pl}
}

\title{Complexity of Problems for Commutative Grammars}

\newcounter{mycount}[section]

\newtheorem{fact}[mycount]{Proposition}
\newtheorem{lemma}[mycount]{Lemma}
\newtheorem{definition}[mycount]{Definition}
\newtheorem{theorem}[mycount]{Theorem}
\newtheorem{corol}[mycount]{Corollary}

\newenvironment{proof}[1][]%
{\medskip{\bf Proof#1. }}%
{\par\medskip}

\def\squareforqed{\hbox{\rlap{$\sqcap$}$\sqcup$}}
\def\qed{\ifmmode\squareforqed\else{\unskip\nobreak\hfil
\penalty50\hskip1em\null\nobreak\hfil\squareforqed
\parfillskip=0pt\finalhyphendemerits=0\endgraf}\fi}

\def\ra{\rightarrow}

\def\supp{{\rm{supp}}}
\def\bbN{{\mathbb N}}
\def\bbR{{\mathbb R}}
\def\bbQ{{\mathbb Q}}
\def\bbQp{{\mathbb P}}
\def\bbQpo{{\mathbb P}}
\def\bbP{{\mathbb P}}
\def\bbZ{{\mathbb Z}}

\def\mult#1{{\bbN^{#1}}}

\def\source{{\rm source}}
\def\target{{\rm target}}
\def\angle{{\rm angle}}
\def\reg{{\rm reg}}
\def\Reg{{\rm Reg}}
\def\prod{{\rm out}}
\def\sgn{{\rm sgn}}
\def\int{{\rm int}}

\def\mv#1#2#3{#1\stackrel{#2}{\rightarrow}#3}

\def\alph{\Sigma}
\def\alphs{{|\Sigma|}}
\def\stats{{|S|}}
\def\Cyc{\mathcal{C}}
\def\OCyc{\mathcal{Y}}
\def\PiP{$\mathrm{\Pi_2^P}$}
\def\bfPiP{$\mathbf{\Pi_2^P}$}
\def\SigP{$\mathrm{\Sigma_2^P}$}

\def\regf{\mathcal{R}}

\def\osum#1#2{#2^{\oplus #1}}

\def\maxcycle{{C_1}}
\def\maxrunp{{C_2}}
\def\maxrun{{C_3}}
\def\cyclemul{{C_4}}
\def\macco{{C_5}}
\def\rgden{{C_6}}
\def\reggen{{C_7}}
\def\regorder{{C_{8}}}
\def\csep{{C_9}}
\def\mulgen{{C_{10}}}
\def\bigconst{{C_{11}}}
\def\regval{\maxrun}

\begin{document}
\maketitle
\pagestyle{plain}

\begin{abstract}
We consider Parikh images of languages accepted by non-deterministic finite
automata and context-free grammars; in other words, we treat the languages in
a commutative way --- we do not care about the order of letters in the
accepted word, but rather how many times each one of them appears. In most
cases we assume that the alphabet is of fixed size. We show tight complexity
bounds for problems like membership, equivalence, and disjointness.
In particular, we show polynomial algorithms for membership and disjointness
for Parikh images of non-deterministic finite automata over fixed alphabet,
and we show that equivalence is \PiP complete for context-free grammars over
fixed terminal alphabet.
\end{abstract}

\section{Introduction}

We consider languages accepted by regular and context-free grammars,
except that we treat the language in a commutative way ---
we do not care about the order of letters in
the accepted word, but rather how many times each one of them appears. 
In this
setting, usual problems, like membership and equivalence, have different
complexities than in the non-commutative case.

A well known classic result in this area is the result of Parikh \cite{parikh}
that, for a context-free grammar $G$ over alphabet $\alph$, the Parikh image
of $G$, i.e., the set $\prod(G) \subseteq \bbN^\alph$ of such multisets $M$
that, in some word $w \in L(G)$, each letter $x$ appears $M(x)$ times, is a
semilinear set. Some complexity results regarding semilinear sets and
commutative grammars have been obtained by
D. Huynh \cite{semipi,semipi2}, who has shown that equivalence is \PiP-hard
both for semilinear sets and commutative grammars (where \PiP\ is the dual of
the second level of the polynomial-time hierarchy, \cite{polyh}).

There are many practical uses of regular and context-free languages which
do not care about the order of the letters in the word. For example, 
when considering regular languages of trees, we might be not 
interested in the ordering of children of a given node. \cite{xml} and
\cite{xml2} consider XML schemas allowing marking some nodes as unordered.

Some research has also been done in the field of communication-free
Petri nets, or Basic Parallel Processes (BPP). A Petri net 
(\cite{petri1}, \cite{petri2}) is communication-free if each transition has
only one input. This restriction means that such a Petri net is essentially
equivalent to a commutative context-free grammar. \cite{hcyen} shows that
the reachability equivalence problem for BPP-nets can be solved in
$DTIME\left(2^{2^{ds^3}}\right)$. For general Petri nets, reachability
(membership in terms of grammars) is decidable
\cite{kosaraju}, although the known algorithms
require non-primitive recursive space; and reachability equivalence
is undecidable \cite{hack26}. Also, some harder types of equivalence problems
are undecidable for BPP nets \cite{huttel}. See \cite{esparza2} for a survey
of decidability results regarding Petri nets.

It turns out that, contrary to the non-commutative case, the size of alphabet
is very important. In the
non-commutative case, we can use strings $a$, $ab$, and $abb$
to encode a three letter alphabet $\{a,b,c\}$ using two letters. Trying to
do this in the commutative case fails, since two different words $ac$ and $bb$
are mapped to $aabb$ and $abab$, which are commutatively the same word.
There is no way to map a three letter alphabet to a two letter one
which does not collapse anything. Each new letter adds a new dimension to
the problem in the commutative case --- literally: 
commutative words (multisets) over an alphabet of size $d$ are better viewed
as points in a $d$-dimensional space, rather than strings.

Contrary to most previous papers on commutative grammars, in most cases we
assume that our (terminal) alphabet is of fixed size. As far as we know, there
have been no successful previous attempts in this direction (except for the
much simpler case $d=1$ \cite{hyunh1}). Our methods enable us to obtain tight
complexity bounds for most of the basic problems (like membership, inclusion,
equivalence, universality, disjointness) for both regular and context-free
commutative grammars, over an alphabet of fixed size.
In some cases, we provide algorithms for the special case $d=2$, as they are
much simpler than the general ones.

In Theorem \ref{algoregmem}, we show a polynomial
algorithm deciding membership for regular languages, i.e., whether a
multiset (given as a binary representation) is in the Parikh image of
a regular language (represented by the non-deterministic finite automaton
accepting it).

In Theorem \ref{grammarnormalform}, we improve upon Parikh's
result quoted above in two ways, assuming that the alphabet is of size 2. First,
$\prod(G)$ is produced as a union of linear sets with only two periods
(whose magnitude is single exponential in size of $G$);
second, these linear sets can be grouped in a
polynomial number of bundles such that each bundle shares the pairs of
periods used (we call such a bundle an $A,B$-frame). This leads to
a \PiP algorithm for solving inclusion (equivalence) for context-free
languages. Unfortunately, such simple presentation is impossible for
alphabets of size greater than 2; we provide a counterexample where
$d=3$, and a much more complicated reasoning which solves the general case
(still resulting in a \PiP algorithm).

The following table summarizes our results.
Alphabet size F means that alphabet is of fixed size, and
U means unfixed size. We consider the basic problems: membership, 
universality, inclusion, and disjointness; note that solving inclusion is
equivalent to solving equivalence --- simple reductions exist in
both ways. We use c as an abbreviation for complete. Our main results ---
our most important algorithms ---  are marked with bold (polynomial algorithms
for checking membership and disjointness for regular grammars over alphabets
of fixed size, \PiP-completeness of the inclusion (equivalence) problems
for context-free grammars over alphabets of fixed size). Problems which have
been shown to be hard are marked with stars (NP-completeness of membership
checking for regular grammars over alphabets of unfixed size and context-free
grammars over alphabets of size 1, coNP-completeness of universality checking
for regular grammars over alphabets of size 1, \PiP-completeness of inclusion
(equivalence) checking for context-free grammars); the proofs are simple and
have been included for sake of completeness.

\begin{center}
\begin{tabular}{|c|cccc|}
 \hline
\multicolumn{5}{|c|}{regular languages} \\
 \hline
alphabet size & 1 & 2 & F & U                \\
 \hline
membership    & P & P & {\bf P} & NPc*       \\
universality  & coNPc* & coNPc & coNPc & ?   \\
inclusion     & coNPc & coNPc & coNPc & ?    \\
disjointness  & P  & P  & {\bf P}  & coNPc   \\
\hline
\multicolumn{5}{|c|}{context-free languages} \\
 \hline                                              
alphabet size & 1 & 2 & F & U \\                     
 \hline                                              
membership    & NPc* & NPc & NPc & NPc \\
universality  & \PiP & \PiP & \PiP & ? \\
inclusion     & \PiP c & \bfPiP {\bf c} & \bfPiP {\bf c} & ? \\
disjointness  & coNPc & coNPc & coNPc & ? \\
\hline
\end{tabular}
\end{center}

\section{Overview}

In this section, we present our techniques and results in an informal way.
The formal version can be found in the following sections.

Our main observation is that we can treat our runs
(or derivations in CF grammars) completely commutatively: we just count
how many times each transition (rule) has been used. In both cases, the
validity of such a ,,commutative run'' can be checked by checking two very
simple conditions: Euler condition (each state is entered as many times
as it is used) and connectedness (there are no unconnected loops) ---
Theorem \ref{algoder}. From this,
we immediately get that checking membership of a given multiset in a Parikh image
of a context-free language is in NP.

The second observation is that we can decompose a run into smaller parts,
ultimately obtaining its \emph{skeleton}, to which we add some (simple) \emph{cycles}.
Since the skeleton uses all states that the original run used (which we call
its \emph{support}), we can add these cycles in arbitrary numbers to our
skeleton, always getting valid runs. Moreover, in case of finite automata
(regular grammars), both the skeleton and the cycles are bounded polynomially
(in the size of the automaton, $|G|$ --- Lemma \ref{fatedge}).

Now, linear algebraic considerations come into play. Whenever we have 
$d$ linearly independent vectors $v_1, \ldots, v_d$ with integer coordinates in a $d$-dimensional
space, for each other vector $v$, there is a constant $C$ such that $Cv$ can
be written as a linear combination of $v_1, \ldots, v_d$ with integer
coefficients (Lemma \ref{detlemma}). This $C$ is bounded polynomially by
coordinates of $v_1, \ldots, v_d$ ($d$ appears in the exponent). In our
case, our vectors will be the Parikh images of our cycles, with $d$ letters
in our alphabet.

Thus, whenever we have a non-negative integer combination of more than $d$
cycles, where the multiplicities of cycles are big enough, we can reconstruct
our Parikh image using different multiplicities of these cycles, and do such
,,shifting'' until
the multiplicities of some cycles drop (Lemma \ref{wrilemma} and Theorem
\ref{theoreg}). Thus, there are at most $d$ cycles which we are using in
large quantities. From this, we get an algorithm for membership
(Theorem \ref{algoregmem}): we can guess the small run and the $d$ cycles
(making sure that the run crosses these cycles), and then just check whether
we obtain our result by adding the cycles in non-negative integer amounts to
our run --- which boils down to solving a system of equations. This algorithm
is polynomial, since everything it uses is bounded polynomially.

The situation is the simplest in the case of $d=2$, where instead of guessing
the two cycles, we can always use the two extreme ones $v_a, v_b$ --- i.e.,
the ones which proportionally contain the greatest quantities of the two
letters $a$ and $b$ in our alphabet. Each other cycle can be written as
a non-negative combination of these two. Still, we have to take the extreme
cycles which cross our small run --- which means that we have to guess the
two states where they cross, and then take extreme cycles crossing these
states. For unfixed $d$, or for context-free grammars,
the problem is NP-complete.

Now, what about context-free grammars? Generally, we can use the same
techniques, but now, skeletons and cycles are bounded exponentially.
In case of $d = 2$, we get Theorem \ref{grammarnormalform}: a Parikh
image of $G$ is a union of a polynomial number of \emph{$A,B$-frames};
where an $A,B$-frame (Definition \ref{defab}) is a set of vectors defined
by some $W$ (a subset of $\bbN^d$ bounded by $A$) and two vectors $v_a, v_b$
(bounded by $B$),
consisting of all vectors of form $w+n_a v_a + n_b v_b$ (where $w\in W$).
The number of $A,B$-frames is polynomial, because our two vectors will
always correspond to extreme cycles from some two states. $A$ and $B$ are
exponential.

The following
picture shows geometrically what an $A,B$-frame is: a set $W$ sitting
inside of a box of size $A$ (drawn as the letter $W$) is copied by shifting
it in two directions. The vectors by
which we are shifting are bounded by $B$.

\begin{center}
\includegraphics{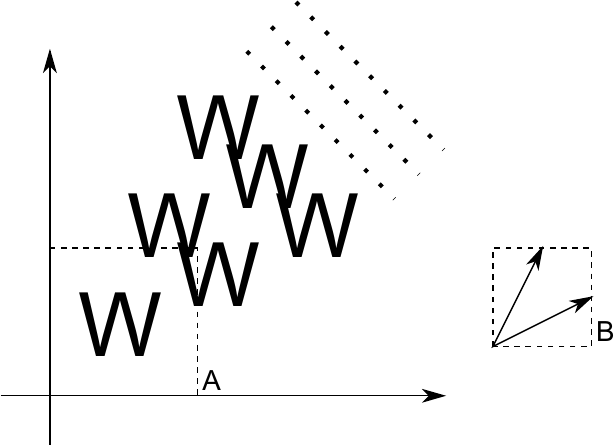}
\end{center}

It turns out that two such unions of $A,B$-frames are equal iff
they are equal in the exponentially bounded region close to 0
(Lemma \ref{anglelemma} and Lemma \ref{formcollapse}).
Together with the fact that membership checking is in NP, we get a \PiP
algorithm for checking inclusion (equivalence) of Parikh images of context-free
grammars (for $d=2$).

For $d>2$, it may be impossible to get a polynomial number of $A,B$-frames
(a nice counterexample in Section \ref{secmore}), which means that Lemma
\ref{formcollapse} fails (the region would be bounded double exponentially).
However, we can circumvent this by splitting $\bbN^d$ into \emph{regions} ---
when restricted to a single region, the number of $A,B$-frames will be
polynomial, thus allowing us to use Lemma \ref{formcollapse} successfully in 
each region separately, again getting a \PiP\ algorithm for deciding equivalence.



\section{Geometry of Multisets}

For a set $X$, the elements of $\mult X$ are interpreted as multisets of
elements of $X$. A set is interpreted as a multiset, thus for $Y \subseteq X$,
$Y(x) = 1$ for $x \in Y$ and $Y(x) = 0$ otherwise. For $x \in X$ and $v \in
\mult X$, $x \in v$ denotes $v(x) > 0$. We will sometimes write
$x$ instead of $\{x\}$ (a multiset containing only a single occurence of
$x \in X$). $\mult X$ is also treated as a subset of $\bbZ^X$, $\bbQ^X$,
and $\bbR^X$.
For $v \in \bbQ^X$, $|v| = \sum_{x\in X} |v(x)|$, $||v|| = 
\max_{x\in X} |v(x)|$. We say $u \geq v$ iff $u_x \geq v_x$ for each $x$.

By $F^X_X$ we denote the set of matrices with coefficients in $F$ and
dimensions indexed with elements of $X$: for a matrix $M \in F^X_X$, $M^i$,
the $i$-th column of the matrix, is a vector in $F^X$. For $M \in \bbQ^X_X$
and $v \in \bbQ^X$, $Mv$ is a vector given by $(Mv)_j = \sum_i M^i_j v_i$,
and $||M|| = \max_{x\in X} ||M^x||$.

We use the notation $[0..K]$ for the set of integers from 0 to $K$,
$\bbQp$ for the set of non-negative rationals (we don't use the more
standard notation of $\bbQ^+$ to avoid double upper indexing, as in
$(\bbQ^+)^X$), and $[0;K]$ for the set
of rationals from 0 to $K$. Thus, for example, $[0..K]^X_X$ denotes matrices
with coefficients in $\bbN$ bounded by $K$.

We can add or multiply sets of scalars, vectors, or matrices, in the usual
way. For example, $U+V = \{u+v: u \in U, v \in V\}$, and $M \bbN^X$ for
$M \in \bbZ^X_X$ is the set of vectors which can be obtained as a linear
combination of columns of $M$ with coefficients from $\bbN$.

\begin{lemma}\label{detlemma}
Let $M$ be a non-degenerate matrix in $\bbZ^\alph_\alph$, and let $v \in \bbZ^\alph$.
Then $(\det M) v \in M \bbZ^\alph$.
\end{lemma}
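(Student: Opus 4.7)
The plan is to exploit the classical adjugate (cofactor) matrix, which provides an explicit integer ``almost-inverse'' of $M$. Since $M$ is non-degenerate, it is invertible over $\bbQ$, and the standard formula gives $M^{-1} = \frac{1}{\det M} \mathrm{adj}(M)$, where $\mathrm{adj}(M)$ is the transpose of the cofactor matrix. Equivalently, $M \cdot \mathrm{adj}(M) = (\det M) \cdot I$. The key point is that every entry of $\mathrm{adj}(M)$ is the determinant of an $(\alphs-1)\times(\alphs-1)$ submatrix of $M$, hence an integer, so $\mathrm{adj}(M) \in \bbZ^\alph_\alph$.

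Given the target vector $v \in \bbZ^\alph$, I would simply set $u = \mathrm{adj}(M)\, v$. Since both $\mathrm{adj}(M)$ and $v$ have integer entries, $u \in \bbZ^\alph$. Then
\[
M u \;=\; M \cdot \mathrm{adj}(M) \cdot v \;=\; (\det M)\, I\, v \;=\; (\det M)\, v,
\]
which exhibits $(\det M)\, v$ as an integer linear combination of the columns of $M$, i.e.\ as an element of $M \bbZ^\alph$.

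The only thing one might worry about is justifying the adjugate identity $M \cdot \mathrm{adj}(M) = (\det M)\, I$ from scratch, but this is entirely standard (Laplace expansion along each column); since $\det M \neq 0$ it can also be viewed as a consequence of Cramer's rule applied component-wise. In fact, Cramer's rule gives the same witness more directly: if we seek $u$ with $Mu = (\det M) v$, then $u_i = \det(M'_i)/\det(M)$ where $M'_i$ is $M$ with its $i$-th column replaced by $(\det M) v$; by multilinearity $\det(M'_i) = (\det M)\cdot \det(M''_i)$ with $M''_i$ obtained by replacing the $i$-th column by $v$, so $u_i = \det(M''_i) \in \bbZ$. There is no real obstacle here — the lemma is essentially a restatement of the existence of the integer adjugate.
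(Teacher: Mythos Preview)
Your proof is correct and entirely self-contained; the adjugate identity $M\cdot\mathrm{adj}(M)=(\det M)I$ immediately yields the explicit witness $u=\mathrm{adj}(M)\,v\in\bbZ^\alph$ with $Mu=(\det M)v$.

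The paper takes a different, group-theoretic route: it observes that the quotient $\bbZ^\alph / M\bbZ^\alph$ is a finite abelian group of order $|\det M|$, so by Lagrange's theorem every element is killed by multiplication by $\det M$, i.e.\ $(\det M)v\equiv 0$. Your argument is more elementary and constructive --- it produces the integer coefficient vector explicitly and needs nothing beyond Laplace expansion --- whereas the paper's argument is shorter but relies on the (not entirely trivial) fact that $|\bbZ^\alph/M\bbZ^\alph|=|\det M|$. For the purposes of the paper either approach is perfectly adequate; your version has the mild advantage that it makes the polynomial bound on the coefficients (namely $||u||\le \alphs!\,||M||^{\alphs-1}||v||$) immediately visible, which is in the spirit of how the lemma is later used in Lemma~\ref{wrilemma}.
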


\begin{proof}
For $v_1, v_2 \in \bbZ^\alph$, we say that $v_1 \equiv v_2$ iff
$v_1 - v_2 \in M \bbZ^\alph$.
The quotient group $\bbZ^\alph /_\equiv$ has $\det M$ elements (intuitively,
for $|\alph| = 2$, the number of elements is equal to the area of the
parallelogram given by columns of $M$; this intuition also works in other
dimensions). Thus, $(\det M) v \equiv 0$.
\qed\end{proof}

\begin{lemma}\label{wrilemma}
Let $V \subseteq [0..K]^\alph$ be a linearly dependent set of vectors.
Then for some $\alpha \in \bbZ^V$ we have $\sum_{v \in V} \alpha_v v = 0$,
where $|\alpha| = O(K^\alphs \alphs!)$, and $\alpha_v > 0$ for some $v$.
\end{lemma}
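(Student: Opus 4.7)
Set $d = \alphs$ for brevity. The idea is to pick a basis for the span of $V$ from within $V$ itself, express one remaining vector as a rational combination of this basis, and then clear denominators via Cramer's rule, using the Leibniz formula to bound the resulting integer coefficients.

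First I would choose a maximal linearly independent subset $V_0 = \{v_1,\ldots,v_r\} \subseteq V$, where $r \leq d$. Since $V$ is linearly dependent, $V_0 \neq V$, so there is some $w \in V \setminus V_0$, and $w$ lies in the rational span of $V_0$. The case $r < d$ means the system is not square, so we cannot apply Lemma \ref{detlemma} directly; instead I would restrict to a square subsystem. Concretely, pick $r$ coordinates $S \subseteq \alph$ such that the $r \times r$ submatrix $M$ formed by the columns $(v_1)_S,\ldots,(v_r)_S$ is non-degenerate (such $S$ exists because $V_0$ has rank $r$).

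Writing $w = \sum_{i=1}^r \beta_i v_i$ with $\beta_i \in \bbQ$ and restricting to $S$ gives $M \beta = w_S$, so by Cramer's rule $\beta_i = \det(M_i)/\det(M)$, where $M_i$ is $M$ with its $i$-th column replaced by $w_S$. Multiplying the original unrestricted equation by $\det(M)$ produces
$$\sum_{i=1}^r \det(M_i)\, v_i \;-\; \det(M)\, w \;=\; 0,$$
which is an integer linear combination of vectors from $V$ whose coefficient on $w$ is the nonzero integer $-\det(M)$.

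To finish, I would bound the coefficients. By the Leibniz formula, any $r \times r$ determinant with entries in $[0..K]$ satisfies $|\det(\cdot)| \leq r!\,K^r \leq \alphs!\,K^\alphs$, so summing the $r+1$ absolute values gives $|\alpha| = O(K^\alphs \alphs!)$. The combination is nontrivial because $\det(M) \neq 0$, and negating everything if necessary ensures at least one $\alpha_v > 0$. The only real subtlety is the non-square case $r < d$, which is handled by passing to the submatrix indexed by $S$ before applying Cramer; after that, everything reduces to the determinant bound.
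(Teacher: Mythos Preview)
Your proof is correct and follows the same overall strategy as the paper's: isolate a basis inside $V$, express one remaining vector in terms of it, clear denominators, and bound the resulting integer coefficients by a determinant estimate. The execution differs in two places worth noting. First, to make the system square you \emph{restrict} to $r$ well-chosen coordinates and work with an $r\times r$ minor, whereas the paper instead \emph{pads} the $|V|-1$ independent vectors up to a full $\alphs\times\alphs$ matrix with unit vectors and then invokes Lemma~\ref{detlemma}. Second, you bound every coefficient $\det(M_i)$ directly via Leibniz; the paper bounds only $|\det M|$ directly, and to control the remaining $|\alpha_v|$ it first passes to a \emph{minimal} linearly dependent subset, picks $u$ to be the vector carrying the largest rational coefficient in the (unique up to scaling) dependence, and concludes $|\alpha_v|\le|\alpha_u|=|\det M|$. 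Your route is a bit more self-contained since it sidesteps this maximum-coefficient argument; the paper's route has the small advantage of reusing Lemma~\ref{detlemma} verbatim. Both arrive at the same $O(K^{\alphs}\alphs!)$ bound.
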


\begin{proof}
Without loss of generality we can assume that $V$ is a minimal
linearly dependent set.
Thus, we get $\sum_{v \in V} \beta_v v = 0$ for
some rational coefficients $\beta \in \bbQ^V$.
Let $u$ be such that $|\beta_u| \geq |\beta_v|$ for each $v$. Let
$M \in \bbZ^\alph_\alph$ be
a non-degenerate matrix whose $|V|-1$ columns are $V-\{u\}$
(we obtain a non-degenerate matrix since $V$ was a minimal linearly
dependent set; if $|V| < \alphs+1$, we fill up the remaining columns with
independent unit vectors).
From Lemma \ref{detlemma} we get that $(\det M) u = M w$ for some $w \in
\bbZ^\alph$. Let $\alpha_u = -\det M$, $\alpha_v = w_i$ where $v = M^i$,
and $\alpha_v = 0$ for remaining vectors. 
We have that $\sum \alpha_v v = 0$. Moreover, we have that 
for some $q \in \bbQ$ we have $\beta_v = q \alpha_v$ for each $v$
(for a minimal linearly dependent set, $(\beta_v)$ is unique up to a
constant); thus,
$|\alpha_v| < |\alpha_u| = \det M = O(K^\alphs \alphs!)$ for each $i$.
\qed\end{proof}

\begin{definition}\label{defab}
An {\bf $A,B$-frame} is a set of form $W + M \bbN^\alph$, where 
$W_i \subseteq [0..A]^\alph$, and $M_i \in [0..B]^\alph_\alph$. 
\end{definition}

\begin{lemma}\label{anglelemma}
Let $\alph = \{a, b\}$. Let $A$, $B$ and $C$ be positive integers.

For $w \in \bbQpo^\alph$ and $M$ in $[0..B]^\alph_\alph$,
let \[\angle(w,M) = w + M \bbQpo^\alph.\]

For $v \in \bbQpo^\alph$, the {\bf region} of $v$ is defined as
\[\reg(v) = \left\{u \in \bbQpo^\alph:
\begin{array}{l}
\forall w \in [0;A]^\alph \ \forall M \in [0..B]^\alph_\alph  \\
\ v \in \angle(w,M) \iff \\ u \in \angle(w,M)
\end{array}
\right\} \]

For each $v \in \bbN^\alph$.
there exists a vector $v' \in \bbN^\alph$ such that 
$||v'|| = O(AB^2+BC)$,
$v' \in \reg(v)$
and $v-v' \in C \bbZ^\alph$.
\end{lemma}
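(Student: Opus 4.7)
The plan is to describe $\reg(v)$ as a convex polyhedron with bounded facet data and then to find $v'$ by decomposing $v$ over a cone inscribed in the region. For invertible $M \in [0..B]^\alph_\alph$, the condition $v \in \angle(w, M)$ is equivalent to $M^{-1} w \leq M^{-1} v$ componentwise. For fixed $M$, each functional $w \mapsto (M^{-1} w)_i$ takes values in a bounded interval as $w$ varies over $[0;A]^\alph$. Two vectors $u, v$ agree on $\angle(w, M)$-membership for all $w$ iff, after clearing $\det M$, for each $(a, b) \in [0..B]^\alph$ arising as a column of $\mathrm{adj}(M)$, the functional $\phi_{(a, b)}(x) := bx_1 - ax_2$ either takes equal values on $u$ and $v$ (when $\phi_{(a, b)}(v) \in [-2AB, 2AB]$) or takes values on the same side of this interval. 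Thus $\reg(v)$ is a convex polyhedron whose facet normals lie in $[0..B]^\alph$ and whose thresholds lie in $[-2AB, 2AB]$.

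Next, I would extract an inscribed cone. If $\reg(v) = \{v\}$ is zero-dimensional, two linearly independent equality constraints pin down $v$, and Cramer's rule applied to the corresponding $2 \times 2$ integer system gives $\|v\| = O(AB^2)$; then $v' = v$ works. Otherwise, the recession cone of $\reg(v)$ is spanned by one or two primitive integer directions $p, q \in \bbN^\alph$ with $\|p\|, \|q\| \leq B$ (each of the form $(a, b)/\gcd(a, b)$ for some facet normal $(a, b) \in [0..B]^\alph$). Retracting $v$ successively along $-p$ and $-q$ until the boundary of $\reg(v)$ is reached gives an apex $v_0 \in \bbQpo^\alph$ at the intersection of two tight facets, with $\|v_0\| = O(AB^2)$ by Cramer's rule on a $2 \times 2$ integer system having coefficients in $[-B, B]$ and right-hand sides in $[-2AB, 2AB]$; moreover $v = v_0 + \alpha p + \beta q$ for some $\alpha, \beta \in \bbQpo$. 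Then I set
\[
v' := v - C\lfloor \alpha/C \rfloor p - C \lfloor \beta/C \rfloor q = v_0 + (\alpha \bmod C) p + (\beta \bmod C) q,
\]
and verify the required properties: $v'$ is integer (integer operations on integer $v$), $v' \in \bbN^\alph$ (integer and in $\bbQpo^\alph$, since it remains in the cone from $v_0$), $v' \in \reg(v)$ (also in that cone), $v - v' = C(\lfloor \alpha/C \rfloor p + \lfloor \beta/C \rfloor q) \in C\bbZ^\alph$, and $\|v'\| \leq \|v_0\| + C(\|p\| + \|q\|) = O(AB^2 + BC)$.

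The hardest step will be rigorously verifying the polyhedral characterization of $\reg(v)$, since the defining quantifier ranges over uncountably many $(w, M)$ pairs; the key observation is that only finitely many distinct constraints emerge, one per direction $(a, b) \in [0..B]^\alph$, indexed by whether $\phi_{(a, b)}(v)$ lies inside or outside the bounded interval $[-2AB, 2AB]$. A secondary subtlety is that some of these constraints may be strict, so the apex $v_0$ may lie on the closure of $\reg(v)$ rather than inside; this is handled by perturbing $v_0$ by $\epsilon(p + q)$ for a small $\epsilon > 0$, which moves into the interior and adds at most $O(B)$ to the norm, not affecting the asymptotic bound.
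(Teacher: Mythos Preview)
Your approach is essentially the same as the paper's: both identify $\reg(v)$ as either a polynomially bounded set or (the interior of) a translated two-dimensional cone with apex bounded by $O(AB^2)$ and generators bounded by $B$, and both then subtract integer multiples of $C$ times the two generators to land in the base parallelogram near the apex. The paper presents this geometrically via a picture of line bundles, whereas you spell out the polyhedral description and invoke Cramer's rule for the apex bound; apart from a couple of small slips (you mean columns of $M$, not of $\mathrm{adj}(M)$, and the recession directions are parallel to the facets rather than equal to the facet normals), the argument and the resulting bound coincide with the paper's.
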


\begin{proof}
The following picture shows this lemma graphically for $B = 3$.
$A$ is the size of the black square in the bottom left corner. 
Lines shown on the picture are boundaries between angles; 
in each bundle, 6 lines are shown, but
it should be understood that there is actually a semi-line starting from
each rational point in the black square.

\begin{center}\includegraphics{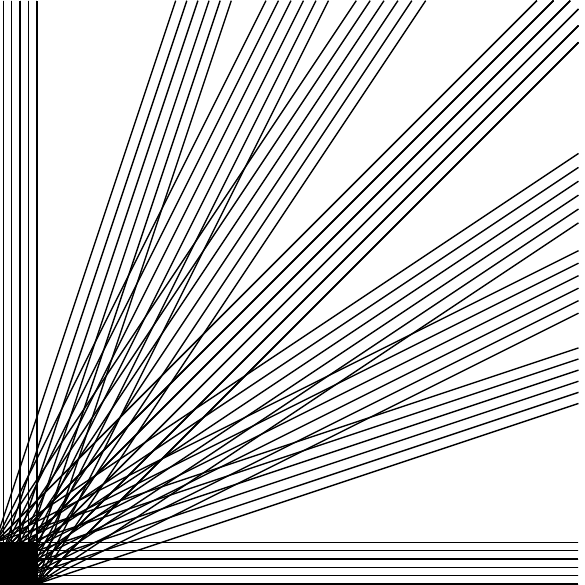}\end{center}

Each angle is the set of points between two semi-lines on the picture which
cross somewhere in the black square. There are three type of regions:
ones containing only one vector (each bounded region is actually a singleton),
8 unbounded regions in angular shapes (between two consective bundles of lines
--- there are 9 bundles of lines because vectors in $[0..3]^\alph$ go in
9 directions), and regions in shape of semi-lines.

It can be easily calculated that each point where bundles of lines going in
different directions cross has its coordinates bounded polynomially
(by $AB^2$). 

Let $v$ be a point. If $\reg(v)$ is a singleton, then we are done (because
$v$ is already bounded polynomially). Otherwise, $\reg(v)$ is the inside of
$\angle(w,M)$, where $w$ is bounded polynomially and $M$ is given by vectors
in two directions $M^1$ and $M^2$ (consecutive or equal).
If $v$ is inside the
paralellogram whose vertices are $w$, $w+CM^1$, $w+CM^2$, $w+CM^1+CM^2$,
then we are done (all those vertices are bounded polynomially). Otherwise
we subtract multiplicities of $CM^1$ and $CM^2$ until we get a point
$w'$ in this paralellogram.
\qed\end{proof}

\begin{lemma}\label{formcollapse}
Let $\alph = \{a, b\}$.
For $i \in I$, let $Z_i = W_i + M_i \bbN^\alph$ be an $A,B$-frame.
Let $v \in \bbN^\alph$. Then there exists a $v' \in \bbN^\alph$
such that $||v'|| = O((A+B)^{O(|I|)})$, and, for each $i$,
$v \in Z_i$ iff $v' \in Z_i$.
\end{lemma}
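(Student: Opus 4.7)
The plan is to apply Lemma~\ref{anglelemma} with a single modulus $C$ that simultaneously encodes the lattice constraints coming from all the frames, so that the same $v'$ handles membership in every $Z_i$ at once.

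First, I would unpack ``$v \in Z_i$'' into an angular part and a lattice part: $v \in Z_i$ iff there exist $w \in W_i$ and $\mu \in \bbQp^\alph$ with $v = w + M_i\mu$ (equivalently $v \in \angle(w, M_i)$) and $\mu \in \bbN^\alph$. Since every $w \in W_i$ lies in $[0;A]^\alph$ and every $M_i$ lies in $[0..B]^\alph_\alph$, the angular conditions are exactly those defining $\reg(v)$, so any $v' \in \reg(v)$ will automatically match $v$ on the angular part of $Z_i$-membership for every $i$.

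For the lattice part I would set $C = \prod_{i \in I}\max(1, |\det M_i|)$. Each entry of $M_i$ is in $[0..B]$, so $|\det M_i| \leq B^2$, giving $C \leq B^{2|I|}$. Lemma~\ref{anglelemma} then supplies $v' \in \bbN^\alph$ with $\|v'\| = O(AB^2 + BC) = O((A+B)^{O(|I|)})$, $v' \in \reg(v)$, and $v - v' \in C\bbZ^\alph$. By Lemma~\ref{detlemma}, for every non-degenerate $M_i$ this gives $v - v' \in M_i\bbZ^\alph$. The equivalence $v \in Z_i \iff v' \in Z_i$ then follows cleanly in the non-degenerate case: writing $v = w + M_i\lambda$ with $\lambda \in \bbN^\alph$, angular agreement supplies $v' = w + M_i\mu$ for some $\mu \in \bbQp^\alph$; then $M_i(\lambda - \mu) = v - v' \in M_i\bbZ^\alph$ combined with non-degeneracy of $M_i$ forces $\mu - \lambda \in \bbZ^\alph$, hence $\mu \in \bbN^\alph$ and $v' \in Z_i$. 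The reverse direction is symmetric.

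The main obstacle is the case of a degenerate $M_i$, for which Lemma~\ref{detlemma} is vacuous. When $M_i = 0$, we have $Z_i = W_i$, and the region condition applied with the pair $(w, 0)$ for each $w \in W_i$ makes $\angle(w, 0) = \{w\}$ pin down $v = w \iff v' = w$, so $v \in W_i \iff v' \in W_i$. When $M_i$ has rank one, with columns $k_1 d, k_2 d$ along a primitive direction $d$, the angular condition forces $v, v'$ onto a common semi-line from some $w \in W_i$, and $M_i\bbN^\alph$ becomes the numerical semigroup $\langle k_1, k_2\rangle \cdot d$; here one has to enlarge $C$ with a further factor of $O(B^{O(1)})$ per rank-one frame (capturing the eventual period $\gcd(k_1,k_2)$ and pushing the difference past the Frobenius number $k_1k_2 - k_1 - k_2$) so that the congruence $v \equiv v' \pmod C$ preserves semigroup membership along the semi-line. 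These extra factors keep $\|v'\| = O((A+B)^{O(|I|)})$, so the bound is unaffected.
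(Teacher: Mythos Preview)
Your proposal is correct and follows essentially the same route as the paper: invoke Lemma~\ref{anglelemma} with a single modulus $C$ built from the determinants of the $M_i$ (you use the product, the paper uses the least common multiple---both give $C = O(B^{O(|I|)})$), and then use Lemma~\ref{detlemma} to turn $v-v' \in C\bbZ^\alph$ into $v-v' \in M_i\bbZ^\alph$ so that the integrality of the coefficients is preserved while the region condition handles non-negativity. The only visible difference is the treatment of degenerate $M_i$: the paper simply asserts that this case ``can be solved easily by changing the matrices'', whereas you sketch an explicit argument via the period $\gcd(k_1,k_2)$ and the Frobenius bound of the one-dimensional numerical semigroup---your sketch is a bit informal (it is really the base set $W_i$, not the difference $v-v'$, that must absorb the pre-Frobenius irregularities), but the idea is sound and leads to the same $O((A+B)^{O(|I|)})$ bound.
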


\begin{proof}
Assume that the matrices $M_i$ are non-degenerate (the case of
degenerate matrices can be solved easily by changing the matrices).

Let $C$ be the least common multiple of determinants of matrices
$M_i$, $C = O(B^{O(|I|)})$.

Let $v \in \bbN^\alph$. Let $v'$
be the vector $v'$ from Lemma \ref{anglelemma} for our $v$ and $C$;
we get that $||v'|| = O((A+B)^{O(|I|)})$.
We will show that it satisfies our conditions.

It is enough to check whether 
$v \in Z'$ iff $v' \in Z'$ for each $Z'$ of form $w_0 + M_i \bbN^\alph$,
where $w_0 \in W_i$.

Since $M_i$ is non-degenerate, for some $\alpha, \alpha' \in \bbQ^\alph$
we have $v = w_0 + M_i \alpha$ and $v' = w_0 + M_i \alpha'$. Since $v'$ is in the
same region as $v$, $\alpha \geq 0$ iff $\alpha' \geq 0$. On the other hand,
$v-v' \in C \bbZ^\alph \subseteq (\det M_i) \bbZ^\alph \subseteq
M_i \bbZ^\alph$ from Lemma \ref{detlemma}. Thus, $v \in \bbN^\alph$ iff
$v' \in \bbN^\alph$.
\qed\end{proof}

\section{Commutative Grammars}

Since in this paper we don't care about the order of symbols in strings
generated by our grammars, we define our grammars commutatively: a state
(nonterminal) produces a multiset of letters and states, not a string.

Derivation trees are defined for commutative grammars similarly as for
the usual ones; we omit this definition. However, we usually also abstract
from derivation trees, by considering our runs as multisets rather than trees:
we don't care where in the tree each transition (production) has been used,
we just count the total number of occurences. We show that there is a simple
condition which checks whether our multiset corresponds to some full derivation,
or a ,,cyclic'' derivation. (A similar algebraic definition of cycles is used
by the algebraic topologists.)

A {\bf commutative grammar} is a tuple $G = (\alph, S, s_0, \delta)$, where
$\alph$ is a finite {\bf alphabet}, $S$ is a finite set of {\bf states}, 
$s_0 \in S$ is an initial state, and $\delta \subseteq S \times \mult A \times
\mult S$ is a set of {\bf transitions}. We will write transitions $(s,a,t)$
as $\mv{s}{a}{t}$; in terms of derivations, each transition consumes the state $s$ and
produces each letter from $a$ and each state from $t$. For a transition
$\tau = \mv{s}{a}{t}$, $\source({\tau}) = s$, $\target({\tau}) = t$, and
$\prod({\tau}) = a$. 

We will assume that each state is a source of some transition.
We will also assume that for each $\mv{s}{a}{t} \in \delta$, $|a| \leq 1$ and
$|t| \leq 2$. (We do this because we want to limit things produced by the
grammar in terms of $\stats$. Grammars not satisfying these conditions can be
easily transformed by adding additional states.)
A commutative grammar satisfying
$|t| \leq 1$ is called a {\bf regular commutative grammar}
(regular grammars are equivalent to non-deterministic finite automata,
with initial state $s_0$ and transitions with $\target(\tau) = 0$ as
transitions to the final state; we prefer to speak about regular grammars 
rather than NFAs for the sake of uniformness).

For a $D \in \mult \delta$, $\source(D) \in \mult S$ counts how many
each state appears as source of a transition: $\source(D) (s) = 
\sum_{\tau: \source(\tau) = s} D(\tau)$, and $\prod(D)$ and $\target(D)$
counts how many each letter and each state, respectively, is produced:
$\prod(D) = \sum_\tau D(\tau) \prod(\tau)$, $\target(D) = \sum_\tau D(\tau)
\target(\tau)$.  The {\bf support} of $D$,
$\supp(D) = \{s \in S: s \in \source(D)\}$. We say that $D$ is {\bf connected
from $s \in S$} if for each $t \in \supp(D)$ there is a path from $s$ to $t$
in $D$, i.e., a sequence $\tau_1, \ldots, \tau_m$ such that $\tau_i \in D$,
$s  = \source(\tau_1)$, $\source(\tau_{i+1}) \in \target(\tau_i)$,
$t \in \target(\tau_m)$. We say that $D$ is a {\bf cycle from $s \in S$} iff
it is connected from $s$ and it satisfies the {\bf Euler condition}:
$\source(D) = \target(D)$ (in terms of derivations, each state is consumed as
many times as it is produced).  We say that $D$ is {\bf run} iff it is connected from
$s_0$ and $\source(D) = \target(D) + \{s_0\}$ (each state is consumed as
many times as it is produced, except $s_0$ which is consumed one time
more).

For a commutative grammar $G$, $\prod(G) = \{\prod(D): D\mbox{ is a run in }G\}$.

The relation between algebraic runs and cycles and derivation trees is as follows:

\begin{fact}\label{algoder}
Let $G$ be a commutative grammar. Then:
\begin{itemize}
\item $D$ is a run iff there is a derivation tree from $s_0$ where each
transition $\tau$ appears $D(\tau)$ times, and all the branches are closed,

\item $D$ is a cycle from $s$ iff there is a derivation tree from $s$ where
each transition $\tau$ appears $D(\tau)$ times, and all the branches are closed
except one with state $s$ at its end (we call such derivation tree {\bf cyclic}).
\end{itemize}
\end{fact}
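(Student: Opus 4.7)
The plan is to prove the two biconditionals of the proposition by separating the easy direction (tree yields multiset) from the harder direction (multiset yields tree). For the easy direction, given a derivation tree rooted at $s_0$ with all branches closed, define $D(\tau)$ to be the number of internal nodes labelled by the transition $\tau$. Connectedness of $D$ from $s_0$ is immediate, since every transition used sits on the unique root-to-node path in the tree. The Euler condition is a bookkeeping argument: each internal state-node $s$ contributes once to $\source(D)(s)$ through its expansion, while each state-labelled node contributes once to $\target(D)$ through its parent's transition, so the imbalance $\source(D) - \target(D)$ counts exactly those internal state-nodes that have no parent, namely the root $s_0$. For a cyclic tree rooted at $s$, the single open branch is a state-node labelled $s$ with no expanding transition, contributing to $\target$ but not to $\source$; this cancels the root contribution and yields $\source(D) = \target(D)$.

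For the converse, I would induct on $|D|$ and construct the derivation tree top-down. In the run case, since $\source(D) \supseteq \{s_0\}$, there is a transition $\tau$ with $\source(\tau) = s_0$; place it at the root. Writing $\target(\tau) = \{r_1, r_2\}$ (allowing the degenerate cases of one or zero children), it suffices to partition $D' := D - \{\tau\}$ as a disjoint sum $D_{r_1} + D_{r_2}$ such that each $D_{r_i}$ is connected from $r_i$ and satisfies the shifted Euler condition $\source(D_{r_i}) = \target(D_{r_i}) + \{r_i\}$; the inductive hypothesis then furnishes subtrees from each $D_{r_i}$ to hang below $\tau$. The cyclic case is analogous, with base case $D = \emptyset$ giving the trivial one-node cyclic tree from $s$.

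The main obstacle is the decomposition step: $D'$ need not be connected from either $r_1$ or $r_2$ individually, and nothing immediately forces a canonical split. I would handle this by choosing $D_{r_1} \subseteq D'$ to be a minimal sub-multiset that is connected from $r_1$ and satisfies the shifted Euler condition at $r_1$; existence follows by a greedy construction, starting from all transitions of $D'$ reachable from $r_1$ and iteratively removing redundant cycles while preserving reachability from $r_1$. The complement $D' - D_{r_1}$ then satisfies the shifted Euler condition at $r_2$ by subtraction, and its connectedness from $r_2$ is inherited from the connectedness of $D$ from $s_0$ together with the minimality of $D_{r_1}$: any transition in the complement is reachable in $D$ from $s_0$ and hence through $\tau$ followed by $r_1$ or $r_2$, while minimality prevents it from being reachable through $r_1$ within the complement. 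With this decomposition lemma in hand, the induction closes both halves of the proposition uniformly.
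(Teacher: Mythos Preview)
Your easy direction is correct and cleanly stated.

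For the hard direction, your recursive top-down splitting is a different strategy from the paper's. The paper builds the derivation tree greedily from $s_0$, applying unused transitions of $D$ to open branches for as long as possible; if unused transitions remain, the Euler and connectedness conditions are invoked to locate a cyclic derivation among the leftovers, which is then spliced into the partially built tree at a node where its base state already occurs. This is essentially Hierholzer's algorithm for Eulerian trails, adapted to the branching setting; no splitting lemma is needed.

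Your approach could in principle be made to work, but the decomposition step as written has a real gap. Your existence argument for a minimal $D_{r_1}$ is: take all transitions of $D'$ reachable from $r_1$ (with full multiplicity), then strip cycles. But that starting multiset need not satisfy the shifted Euler condition, and stripping cycles cannot repair a non-cycle defect. Concretely, take transitions $\tau\colon s_0\to r_1r_2$, $\alpha\colon r_1\to q$, $\beta\colon r_2\to q$, $\gamma\colon q\to 0$, and the run $D=\{\tau,\alpha,\beta,2\gamma\}$. Then $D'=\{\alpha,\beta,2\gamma\}$; the transitions reachable from $r_1$ are $\{\alpha,2\gamma\}$, with $\source-\target=\{r_1,q\}\neq\{r_1\}$, and there is no cycle to remove. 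The correct split $E_1=\{\alpha,\gamma\}$, $E_2=\{\beta,\gamma\}$ requires apportioning the two copies of $\gamma$ between the two subtrees, which your construction never does.

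The connectedness claim for the complement $D'-D_{r_1}$ is also too quick. A path in $D$ from $s_0$ to some $t\in\supp(D')$ need not begin with $\tau$ (there may be other transitions out of $s_0$ in $D$), and even when it does pass through $r_1$ or $r_2$, it may use transitions you placed in $D_{r_1}$ and hence fail to survive in the complement. Minimality of $D_{r_1}$ does not by itself force the remainder to be connected from $r_2$. One way to rescue your scheme is to build $E_1$ as a closed derivation tree from $r_1$ greedily inside $D'$ and argue separately that $D'-E_1$ is a run from $r_2$; but carrying that out amounts to redoing the paper's Hierholzer-style argument inside the inductive step.
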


\begin{proof}[ of Proposition \ref{algoder}]
We show the proof for runs (for
cycles the proof is similar).

Start with $s_0$ and try applying transitions from $D$ 
(obviously, using each transition as many times as it appears in $D$)
as long as we have some open branches. If we have
used all the elements of $D$ in the process, we are done. Otherwise, 
since the run $D$ is connected, there
must be some state $s$ such that $D$ contains some transition from $s$ which
is still not used, and $s$ already appears in our derivation constructed
so far. Since each derivation
uses each state as many times as it was produced, and so does $D$, there
also must be a yet unused transition $\tau_1$ in $D$ which produces $s$,
from, say, $s_1$. If $s \neq s_1$, for the same reason there must be a yet
unused transition $\tau_2$ in $D$ which produces $s_1$ from some $s_2$.
Finally, we produce some $s_k$ from $s$. We create a cyclic 
derivation tree with transitions $\tau_k, \ldots, \tau_1$ on its
main branch, closing all the side branches with remaining unused transitions
from $D$. We insert this cycle into our tree
(we have produced $s$ in some place; we cut off the part of tree
from this $s$, insert our cycle here, and we attach the part of tree we
cut off to the open branch of our cycle).
Repeat until all elements of $D$ have been used.
\qed\end{proof}

Thus, if $G$ is a commutative version of some context-free grammar $H$,
then $\prod(G)$ equals the Parikh image of $L(H)$, i.e., $v \in \prod(G)$ iff
there exists a $w \in L(H)$ such that each letter $a$ appears in $w$ $v(a)$
times.

One inclusion is obvious. In the case of regular grammars,
the cycle is just what is expected (a cycle in the transition graph),
and the other inclusion is equivalent to the classic theorem of Euler
(characterization of graphs with Eulerian paths and cycles);
in general, it is a simple generalization. 

A cycle is called a {\bf simple cycle} iff it cannot be decomposed 
as a sum of smaller non-zero cycles, and a run $D$ is called a {\bf skeleton run} 
if it cannot be decomposed as a sum of a run $D_1$ and a non-zero cycle $C$,
where $\supp(D_1) = \supp(D)$.
For each state $s \in S$, let $\Cyc_s$ be the set of simple cycles
from $s$, and $\Cyc_T = \bigcup_{s\in T} \Cyc_s$, for $T \subseteq S$.
Also, let $\OCyc_s = \prod(\Cyc_s)$, and 
$\OCyc_T = \bigcup_{s\in T} \OCyc_s$ (cycle outputs).

\section{Membership checking}

\begin{lemma}\label{fatedge}
Let $G$ be a regular commutative grammar, and $D$ be a run such that
$|D| > 1+n|\delta|$. Then $D = D_1 + n C$, where
$D_1$ is a run, $C$ is a simple cycle, and $\supp(D_1) = \supp(D)$.
\end{lemma}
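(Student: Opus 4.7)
My plan is to combine a pigeonhole argument on transition multiplicities with the Eulerian-flow structure of $D$ supplied by Fact~\ref{algoder}. Since $|D|=\sum_{\tau\in\delta}D(\tau)>1+n|\delta|$, pigeonhole immediately produces some $\tau^*\in\delta$ with $D(\tau^*)\ge n+1$, and this $\tau^*$ will lie on the cycle I extract.

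Viewing $D$ as an Eulerian trail from $s_0$ through the multigraph whose edges are the transitions with multiplicities $D(\tau)$, one obtains a flow decomposition $D=P+\sum_{i=1}^{m}k_iC_i$, where $P$ is a simple path from $s_0$ to the terminating transition and the $C_i$ are distinct elementary directed cycles in the transition graph, with $m\le|\delta|$, $|P|\le|\delta|$, and $|C_i|\le|\delta|$ (since each cycle extraction can be arranged to zero out a fresh transition). The combinatorial heart of the proof is to show that this decomposition can be chosen so that some $k_{i^*}\ge n$. I would do this by extracting cycles greedily, at each stage selecting the simple cycle $C$ maximizing its bottleneck $\min_{\tau\in C}D(\tau)$; from $|D|>1+n|\delta|$ and $|P|\le|\delta|$ one gets $\sum_i k_i|C_i|>(n-1)|\delta|+1$, and a counting argument then forces $k_{i^*}\ge n$ for the first cycle extracted. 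Setting $C:=C_{i^*}$ yields $nC\le D$.

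The residual $D_1:=D-nC=P+(k_{i^*}-n)C_{i^*}+\sum_{j\ne i^*}k_jC_j$ automatically satisfies the Euler balance (since $C$ is a cycle) and is connected from $s_0$ (since $P$ is intact), so it only remains to verify $\supp(D_1)=\supp(D)$. Every state of $\supp(D)$ is covered by $P$ or by some $C_j$; subtracting $nC$ only risks dropping a vertex $s$ that appears exclusively on $C_{i^*}$, and only in the tight case $k_{i^*}=n$. This is where I expect the main obstacle to lie, and where the strict inequality $|D|>1+n|\delta|$ together with the $+1$ offset becomes essential: the slack should force either $k_{i^*}>n$ strictly (so that $(k_{i^*}-n)C_{i^*}$ still covers the vertex) or the availability of an alternative simple cycle through any orphaned vertex, which we can swap in for $C_{i^*}$. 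I anticipate resolving this via a short case analysis, rerouting through a neighbouring $C_j$ to rescue any endangered state while keeping the extraction count at exactly $n$.
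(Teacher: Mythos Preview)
Your counting step does not go through. From $\sum_i k_i|C_i|>(n-1)|\delta|+1$, together with $m\le|\delta|$ and $|C_i|\le|\delta|$, one can only extract $\max_i k_i>(n-1)/|\delta|$, which is far from $n$. In fact the statement with the $\ell_1$-norm $|D|$ is false already for $n=3$: take states $s_0,s_1$, seven parallel transitions $\alpha_1,\dots,\alpha_7\colon s_0\to s_1$, a single $\beta\colon s_1\to s_0$, and a terminating $\phi\colon s_0\to\emptyset$, and set $D(\alpha_i)=2$, $D(\beta)=14$, $D(\phi)=1$. Then $|\delta|=9$ and $|D|=29>1+3\cdot 9$, yet every simple cycle is some $\alpha_i+\beta$ with $D(\alpha_i)=2<3$, so no $3C$ can be subtracted from $D$. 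Your greedy max-bottleneck cycle is the right object to look at, but the hypothesis that makes its bottleneck exceed $n$ is a bound on $\|D\|$, not on $|D|$.

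The paper's proof in fact starts from a single transition $\tau$ with $D(\tau)\ge 1+n|\delta|$ (so the intended hypothesis is really $\|D\|>n|\delta|$), and then runs a reachability/flow argument rather than a global decomposition: let $T$ be the set of states reachable from $\target(\tau)$ using only transitions of multiplicity $>n$. If $\source(\tau)\in T$ one has found a simple cycle \emph{all of whose edges} have multiplicity $>n$; subtracting $n$ copies leaves every such edge with multiplicity $\ge 1$, so $\supp(D_1)=\supp(D)$ is automatic and your anticipated ``main obstacle'' evaporates. If $\source(\tau)\notin T$, the Euler condition forces the flow out of $T$ to equal (up to $\pm1$) the flow into $T$, which is at least $D(\tau)\ge 1+n|\delta|$; but every transition leaving $T$ has multiplicity $\le n$ by definition of $T$, and there are at most $|\delta|$ of them, giving out-flow $\le n|\delta|$, a contradiction. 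This ``expand along heavy edges and apply Euler across the cut'' idea is the missing ingredient; it simultaneously produces the large multiplicity and the support preservation that your case analysis was meant to rescue.
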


In case of $n=1$, we get a limit on the size of a skeleton run.

\begin{proof}
Let $\tau$ be a transition such that $D(\tau) \geq 1 + n |\delta|$.
We have $|\target({\tau})| = 1$ (it cannot be greater because
$G$ is regular, and cannot be 0 becuase each run in a regular grammar has
exactly one transition with $|\target({\tau})| = 0$).
Let $t \in \target({\tau})$. Let $T$ be the set of
states which can be reached from $t$ via a path using only transitions
$\tau_i$ such that $D(\tau_i) > n$. If $\source(\tau) \in T$, this finishes
the proof (we have found a cycle in the graph of transitions, which can be
easily translated to an algebraic cycle). Otherwise, let $u = \sum_{\tau \in D:
\source(\tau) \notin T, \target(\tau) \in T} D(\tau)$, and
$v = \sum_{\tau \in D: \source(\tau) \notin T, \target(\tau) \in T} D(\tau)$.
From the Euler condition, we get that $u = v$. Since $\tau$ is counted in 
$u$ $1+ n |\delta|$ times, and there are $|\delta|$ transitions, there
must exist a transition $\tau'$ which is counted $n+1$ times in $v$. This
is a contradiction (we have found a path from $t$ to $\target(\tau') \notin T$).
\qed\end{proof}

\begin{theorem}\label{theoreg}
Let $G$ be a regular commutative grammar, and $K \in \mult \alph$. Then
$K \in \prod(G)$ iff there exists a run $D$ in $G$, $||D|| = 
O(\stats^{2\alphs} \alphs!)$, and
simple cycles $C_1, \ldots, C_m$, $C_i \in \Cyc_{\supp(D)}$, such that
$C_i$ are linearly independent and $K = \prod(D) + \sum_i \alpha^i \prod(C_i)$
for some $\alpha^1, \ldots, \alpha^m$.
\end{theorem}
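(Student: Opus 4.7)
\bigskip
\noindent\textbf{Proof plan for Theorem \ref{theoreg}.}

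The reverse direction ($\Leftarrow$) is immediate: by Proposition~\ref{algoder}, each simple cycle $C_i\in\Cyc_{\supp(D)}$ corresponds to a cyclic subtree that can be inserted at any occurrence of its starting state in a derivation of $D$; inserting $\alpha^i$ copies of each cycle yields a run whose Parikh image is $\prod(D)+\sum_i\alpha^i\prod(C_i)=K$, so $K\in\prod(G)$.

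For the forward direction, pick any run $E$ with $\prod(E)=K$ and apply Lemma~\ref{fatedge} repeatedly (say with $n=1$) to peel off simple cycles while preserving the support. We obtain $E=E_0+\sum_{j\in J}\gamma^j C_j$ with $|E_0|\le 1+|\delta|$, each $C_j$ a distinct simple cycle in $\Cyc_{\supp(E)}$, and $\gamma^j\ge 1$. Since the grammar is regular, a simple cycle cannot repeat a transition, so every $\prod(C_j)\in[0..\stats]^\alph$. Initialize $V:=\{C_j:j\in J\}$ and iterate the following \emph{shift-and-absorb} reduction. If the vectors $\{\prod(C_j):C_j\in V\}$ are linearly independent, stop. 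Otherwise Lemma~\ref{wrilemma} yields integers $\beta_j$ with $\sum_j\beta_j\prod(C_j)=0$, $|\beta|=O(\stats^\alphs\alphs!)$, and some $\beta_{j_0}>0$. Shift $\gamma^j\mapsto\gamma^j-t\beta_j$ with the largest $t\in\bbN$ keeping every $\gamma^j$ non-negative; the tight index $j^*$ then satisfies $0\le\gamma^{j^*}<\beta_{j^*}$. Because $\sum_j\beta_j\prod(C_j)=0$ the Parikh image is unchanged by the shift, so I can absorb the small residue into the skeleton by setting $E_0\leftarrow E_0+\gamma^{j^*}C_{j^*}$ (still a valid run, as $C_{j^*}$ is a cycle at a state of $\supp(E_0)$) and removing $C_{j^*}$ from $V$. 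This strictly decreases $|V|$, so the process terminates with a linearly independent $V$ of size $m\le\alphs$; setting $D:=E_0$ and $\alpha^i:=\gamma^i$ on the surviving cycles gives the claimed decomposition.

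The remaining difficulty is quantitative: showing $||D||=O(\stats^{2\alphs}\alphs!)$. Each absorption raises any single transition count in $E_0$ by at most $\gamma^{j^*}\cdot ||C_{j^*}||<|\beta|\cdot 1=O(\stats^\alphs\alphs!)$, because a simple cycle uses each transition at most once. The total increase is therefore governed by the number of iterations, which equals $|J|-m$. To make this polynomial in $\stats$, I would extract cycles greedily in Lemma~\ref{fatedge} (using the largest admissible $n$ at each peeling step so that each distinct simple cycle enters the decomposition only when necessary) and, if needed, re-peel after each absorption whenever a new heavy transition emerges. Combined with the initial bound $|E_0|=O(|\delta|)$, a careful count of these iterations yields the stated $O(\stats^{2\alphs}\alphs!)$ norm; keeping $|J|$ polynomial under the shift-and-absorb loop is the delicate bookkeeping step, and is where I expect the proof to require the most care.
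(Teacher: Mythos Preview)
Your overall architecture matches the paper's --- peel to a skeleton, then use Lemma~\ref{wrilemma} to shift cycle multiplicities until only a linearly independent family survives --- but the step you yourself flag as ``delicate'' is a genuine gap, and your proposed patches (greedy $n$ in Lemma~\ref{fatedge}, re-peeling) do not close it. The number $|J|$ of \emph{distinct simple cycles} in a regular grammar can be exponential in $\stats$, so an argument that removes one $C_j$ per iteration while adding up to $O(\stats^{\alphs}\alphs!)$ to $||E_0||$ each time only yields an exponential bound on $||D||$, not $O(\stats^{2\alphs}\alphs!)$.

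The missing observation is that one should work with cycle \emph{outputs} $Y=\prod(C)$ rather than with cycles themselves. You already note $\prod(C_j)\in[0..\stats]^\alph$; the point is that this forces the number of distinct outputs to be at most $O(\stats^{\alphs})$, regardless of how many simple cycles realise them. The paper collapses the decomposition to $K=\prod(D_2)+\sum_{Y\in\OCyc_{\supp(D_0)}}\gamma_Y\,Y$, fixes a threshold $L=O(\stats^{\alphs}\alphs!)$, and lets $P=\{Y:\gamma_Y\ge L\}$. Lemma~\ref{wrilemma} is applied only inside $P$, each application dropping some $Y$ below $L$ (and hence out of $P$), until $P$ is linearly independent. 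Then \emph{all} outputs outside $P$ --- at most $O(\stats^{\alphs})$ of them, each with $\gamma_Y<L$ --- are absorbed into the run in one stroke: $D_1=D_2+\sum_{Y\notin P}\gamma_Y\,C_Y$ for any simple cycle $C_Y$ with $\prod(C_Y)=Y$. Since $||C_Y||\le 1$, this gives $||D_1||\le|\delta|+O(\stats^{\alphs})\cdot L=O(\stats^{2\alphs}\alphs!)$ with no iteration bookkeeping at all.

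Your loop can be salvaged by the same idea: before running shift-and-absorb, merge any two $C_j,C_{j'}$ with $\prod(C_j)=\prod(C_{j'})$ at zero cost (transfer $\gamma^{j}$ to $\gamma^{j'}$ and discard $C_j$). This leaves at most $O(\stats^{\alphs})$ cycles, so your loop then terminates in $O(\stats^{\alphs})$ iterations and yields the claimed bound. But without explicitly passing from cycles to outputs, the argument as written does not establish the stated norm.
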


\begin{proof}

\def\ocdo{\OCyc_{\supp(D_0)}}

Let $D_0$ be a run in $G$ such that $K = \prod(D_0)$.
We decompose the run $D_0$ into a sum of simpler runs (on the same
support) and simple cycles,
until we get $D_0 = D_2 + \sum_{C \in \Cyc_{\supp(D_0)}} \gamma_C C$, where
$D_2$ is a skeleton. From Lemma \ref{fatedge} we get that
$||D_2|| \leq |\delta|$. By taking $\prod$'s, we get
$K = \prod(D_2) + \sum_{Y \in \ocdo} \gamma_Y Y$, where $\gamma_Y \in \bbN$
for each $Y$.

Let $P \subseteq \ocdo$ be the set of such cycle outputs $Y$ that
$\gamma_Y \geq L$ for $L = O(\stats^\alphs \alphs!)$.
We can decompose $K$ so that $P$ is linearly independent.
Otherwise, by Lemma \ref{wrilemma}, 
$\sum_{Y\in P} \alpha_Y Y = 0$ for
some $\alpha_Y$, $|\alpha_Y| \leq L$, and $\alpha_Y > 0$ for some $Y$.
This allows us to transfer multiplicites between different cycles: 
if we take $\gamma'_Y = \gamma_Y - \alpha_Y$ for $Y \in P$, and
$\gamma'_Y = \gamma_Y$ for $Y \notin P$, we have
$\sum_Y \gamma_Y Y = \sum_C \gamma'_C Y$. We transfer
multiplicites (i.e., replace $\gamma$ with $\gamma'$) until one of our cycles
is no longer in $P$.

Let $D_1 = D_2 + \sum_{C \in \Cyc_{\supp(D_1)} - P} \gamma_C C$. Since
$||D_2|| \leq |\delta|$, $\gamma_C < L$, and there are at most
$O(\stats^\alphs)$ distinct simple cycles in $\ocdo$ up to equivalence of
$\prod$'s, we get that $||D_1|| \leq |\delta| + L O(\stats^\alphs)$. 
Now, $K = D_1 + \sum_{Y \in P} \gamma_Y Y$.
\qed\end{proof}

\begin{theorem}\label{algoregmem}
For an alphabet $\alph$ of fixed size, and a commutative regular grammar 
$G$ over $\alph$, and $K \in \mult \alph$,
the problem of deciding whether $K \in \prod(G)$ is in P.
\end{theorem}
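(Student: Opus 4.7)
My plan is to combine Theorem~\ref{theoreg} with a polynomial enumeration over small witnesses. By that theorem, $K\in\prod(G)$ iff there is a run $D$ with $||D||=O(\stats^{2\alphs}\alphs!)$, a collection of at most $\alphs$ linearly independent simple cycles $C_1,\ldots,C_m\in\Cyc_{\supp(D)}$, and $\alpha^i\in\bbN$ with $K=\prod(D)+\sum_i\alpha^i\prod(C_i)$; for fixed $\alphs$ all these quantities are polynomial in $\stats$.

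The key observation is that, in a regular grammar, a simple cycle is a simple cycle in the transition graph, so it uses each transition at most once and $\prod(C_i)\in[0..|\delta|]^\alph$; and $||\prod(D)||\leq|\delta|\cdot||D||$ is also polynomial. Consequently, for fixed $\alphs$ there are only polynomially many tuples $(s_1,\ldots,s_m;Y_1,\ldots,Y_m;v)$ with $m\leq\alphs$, $s_i\in S$, $Y_i\in[0..|\delta|]^\alph$ standing for $\prod(C_i)$, and $v\in\bbN^\alph$ of polynomially bounded magnitude standing for $\prod(D)$. My algorithm enumerates all such tuples; for each it checks linear independence of the $Y_i$, realisability of each $Y_i$ by some simple cycle through $s_i$ (a polynomial Eulerian-circuit check in the obvious sub-graph), solves the unique system $v+\sum_i\alpha^iY_i=K$ in constant time, and verifies that all $\alpha^i$ lie in $\bbN$.

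The remaining task, and the main obstacle, is to certify that a genuine run $D$ exists with $\prod(D)=v$ and $\{s_1,\ldots,s_m\}\subseteq\supp(D)$. I expect to cast this as integer feasibility of a polynomial-size linear system in the variables $D(\tau)$: the $\stats$ Euler balance equations $\source(D)=\target(D)+\{s_0\}$, the $\alphs$ output equalities $\prod(D)=v$, and the inequalities $\sum_{\source(\tau)=s_i}D(\tau)\geq 1$. The delicate point is the non-linear connectedness-from-$s_0$ requirement. The idea is that any valid run contains an arborescence of transitions of size $O(\stats^2)$ rooted at $s_0$ that reaches every required $s_i$ (take a shortest $s_0$-to-$s_i$ path inside $D$ for each $i$), and the Euler constraints inherit a totally-unimodular network-flow structure; so one can restrict attention to such a polynomially-bounded skeleton and the residual integer feasibility problem is solvable in polynomial time by standard network-flow techniques. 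Combining the polynomially many enumerations with this per-step check yields the claimed deterministic polynomial-time algorithm.
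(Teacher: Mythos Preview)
Your overall plan---enumerate the polynomially many candidate tuples supplied by Theorem~\ref{theoreg} and test each---is the paper's strategy, but both realisability sub-checks have gaps. For the cycles: detecting a \emph{simple} cycle (no repeated state) through $s_i$ with prescribed output $Y_i$ is not an Eulerian-circuit question; it would naively require tracking the set of already-visited states, and you give no argument why this is polynomial. The paper sidesteps the issue by noting that Theorem~\ref{theoreg} holds verbatim with \emph{short} cycles ($|C|\le\stats$, i.e.\ closed walks of bounded length) in place of simple ones; then $\OCyc_s$ is computed by a direct dynamic program indexed by (current state, output so far, length).

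For the run-existence check your ILP route does not work as written. Adding the $\alphs$ output equalities $\prod(D)=v$ to the Euler flow system destroys total unimodularity, so ``standard network-flow techniques'' no longer guarantee an integral solution; and fixing an arborescence that reaches the $s_i$ does not enforce connectedness of the \emph{entire} support of $D$ from $s_0$---$D$ may well use further states, and those must also be reachable. The paper's resolution is elementary and avoids integer programming altogether: since $||D||$ is polynomially bounded, so is $|D|=\sum_\tau D(\tau)$, hence a run in a regular grammar is simply a walk of polynomial length in the transition graph. For each $T\subseteq S$ with $|T|\le\alphs$ one then computes the set of all outputs $\prod(D)$ of bounded runs with $T\subseteq\supp(D)$ by dynamic programming on (current state, output so far in $[0..\mathrm{poly}(\stats)]^\alph$, subset of $T$ already hit, length), all of whose dimensions are polynomial for fixed $\alphs$.
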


\begin{proof}
The theorem \ref{theoreg} remains true if we define $\Cyc_s$ and 
$\OCyc_s$ using 
\emph{short cycles} instead of simple cycles --- a cycle $C$ is short
iff $|C| \leq |S|$. 
This allows us to calculate sets $\OCyc_s$ for each state $S$
using simple dynamic programming.

For each $T \subseteq S$ of size at most
$\alphs$, we calculate the set of possible $\prod(D)$ with $D$
satisfying the limit from Theorem \ref{theoreg} and $T \subseteq \supp(D)$.
For each element
of $\prod(D)$ and each sequence of linearly independent elements of $\OCyc_T$,
$Y_1\ldots Y_m$, we check whether $K = \prod(D) + \sum_i \alpha^i Y_i$ for
some $\alpha^i$, which can be done by solving a system of equations.
\qed\end{proof}

\begin{theorem}\label{algoregdis}
For an alphabet $\alph$ of fixed size, and two commutative regular grammars 
$G$ and $H$ over $\alph$, the problem of deciding whether 
$\prod(G) \cap \prod(H) = \emptyset$ is in P.
\end{theorem}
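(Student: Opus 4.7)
The plan is to apply the membership characterization of Theorem~\ref{theoreg} simultaneously to both grammars. Disjointness fails iff some $K$ lies in both Parikh images, which by that theorem is equivalent to the existence of skeleton-type runs $D_G$ in $G$ and $D_H$ in $H$ with $||D_G||, ||D_H|| = O(\stats^{2\alphs}\alphs!)$, of sequences of linearly independent short cycle outputs $Y_1^G,\ldots,Y_m^G \in \OCyc_{\supp(D_G)}$ and $Y_1^H,\ldots,Y_{m'}^H \in \OCyc_{\supp(D_H)}$, and of non-negative integers $\alpha^i, \beta^j$ satisfying
\[ \prod(D_G) + \sum_i \alpha^i Y_i^G = \prod(D_H) + \sum_j \beta^j Y_j^H. \]

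The algorithm enumerates all such configurations. First, as in the proof of Theorem~\ref{algoregmem}, compute $\OCyc_s$ for every state of $G$ and of $H$ by dynamic programming on short cycles; with $\alphs$ fixed these sets have polynomial size. Next, for each pair of supports $T_G \subseteq S_G$ and $T_H \subseteq S_H$ of size at most $\alphs$, enumerate the polynomially many candidate vectors $\prod(D_G)$ and $\prod(D_H)$ respecting those supports and the above size bound, and for each, enumerate all sequences of at most $\alphs$ linearly independent cycle outputs drawn from $\OCyc_{T_G}$ and $\OCyc_{T_H}$. Only polynomially many configurations arise in total.

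For each configuration, rewrite the equation as $\sum_i \alpha^i Y_i^G - \sum_j \beta^j Y_j^H = \prod(D_H) - \prod(D_G)$ and decide whether it admits a solution in non-negative integers $\alpha^i, \beta^j$. The two Parikh images are disjoint iff every configuration is infeasible.

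The main obstacle is the final feasibility test. Unlike in the membership argument, where at most $\alphs$ linearly independent cycles force the rational solution to be unique and leave nothing to verify beyond non-negativity and integrality, here we have up to $2\alphs$ unknowns, so the system is a genuine integer programming problem. However, the number of equations and the number of variables are both bounded by a constant (since $\alphs$ is fixed), so Lenstra's algorithm for integer linear programming in fixed dimension settles feasibility in time polynomial in the bit-size of the coefficients, yielding an overall polynomial running time.
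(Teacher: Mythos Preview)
Your argument is correct, but the paper takes a different and somewhat slicker route. Instead of applying Theorem~\ref{theoreg} separately to $G$ and $H$ and then reconciling the two decompositions, the paper observes that none of the lemmas feeding into Theorem~\ref{algoregmem} actually require the transition outputs to be non-negative (in particular, Lemma~\ref{wrilemma} works over $[-K..K]^\alph$). It then negates all outputs in $H$ to obtain $H^{-1}$, concatenates to form the regular grammar $GH^{-1}$, and notes that $\prod(G)\cap\prod(H)\neq\emptyset$ iff $0\in\prod(GH^{-1})$. Disjointness thus reduces to a single membership test.

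The practical difference is in the last step. In the paper's reduction, Theorem~\ref{theoreg} applied to the \emph{combined} grammar yields at most $\alphs$ linearly independent cycle outputs, so the resulting linear system has a unique rational solution and one only has to check integrality and sign; no integer-programming machinery is needed. Your approach keeps the two grammars separate, which produces up to $2\alphs$ cycle vectors and hence a genuinely underdetermined system, forcing you to invoke Lenstra's algorithm in fixed dimension. That is perfectly legitimate (and the dimension is indeed constant), but it imports an external tool where the paper's trick of working with signed outputs keeps everything self-contained.
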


\begin{proof}
Note that in the proof of Theorem \ref{algoregmem} we have actually never
used our assumption that outputs of our transitions are non-negative, e.g.,
Lemma \ref{wrilemma} works as well for $V \subseteq [-K..K]^\Sigma$. Thus, we
can check whether $\prod(G)$ and $\prod(H)$ are disjoint by checking whether
$0 \notin \prod(GH^{-1})$, where $H^{-1}$ is obtained from $H$ by negating
outputs of all transitions, and $GH^{-1}$ is a regular grammar obtained via
the usual method of concatenating languages given by regular grammars $G$
and $H^{-1}$.
\qed\end{proof}

\section{Inclusion checking}

\def\expon{O\left(2^{|S|^{O(1)}}\right)}

\begin{lemma}\label{exponlem}
Let $G$ be a commutative grammar over $\alph$.
If $D$ is a simple cycle or a skeleton run, then $\prod(D) = \expon$.
\end{lemma}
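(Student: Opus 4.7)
By Proposition \ref{algoder}, the multiset $D$ corresponds to a derivation tree (cyclic, with one open leaf labelled by a state, in the simple-cycle case, and fully closed in the skeleton case). Because every transition has $|\target(\tau)| \le 2$ and $|\prod(\tau)| \le 1$, this tree is binary at internal nodes and $||\prod(D)|| \le |\prod(D)| \le |D|$, where $|D|$ equals the number of internal nodes and hence is at most exponential in the depth $h$ of the tree. Therefore it suffices to prove $h = |S|^{O(1)}$.

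For a simple cycle $D$ I would show $h \le |S|$. If $h > |S|$, then by pigeonhole some state $s$ occurs at two positions $q_1, q_2$ on a single root-to-leaf path, with $q_1$ an ancestor of $q_2$; replacing the subtree at $q_1$ by the subtree at $q_2$ yields a strictly smaller derivation of the same type, while the pumped-out portion --- the subtree at $q_1$ with the subtree at $q_2$ excised --- is a cyclic derivation from $s$ to $s$. By Proposition \ref{algoder} this decomposes $D$ as a sum of two non-zero cycles, contradicting simplicity. Hence $|D| \le 2^{O(|S|)}$ and $\prod(D)$ is bounded likewise.

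The skeleton case is more delicate because the pumping must preserve $\supp(D)$. I would set the depth threshold at $h^* = |S|(|\delta|+2)$; if $h > h^*$, pigeonhole produces a root-to-leaf path on which some state $s$ appears at positions $q_1 < \cdots < q_k$ with $k \ge |\delta|+3$, giving $k-1 \ge |\delta|+2$ pairwise disjoint cyclic slices $C_i$, defined as the subtree at $q_i$ with the subtree at $q_{i+1}$ excised. Each $C_i$ is a non-zero cycle, $D - C_i \ge 0$ satisfies the Euler condition of a run, and the only remaining question is whether $\supp(D-C_i) = \supp(D)$. If yes, then $D = (D - C_i) + C_i$ is a decomposition contradicting the skeleton property; if no, then some ``critical'' transition $\tau_i \in \supp(C_i)$ has $D(\tau_i) = C_i(\tau_i)$, meaning $\tau_i$ occurs in $D$ only inside $C_i$. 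Disjointness of the slices forces the critical transitions $\tau_i$ to be pairwise distinct wherever they exist, so at most $|\delta|$ of the slices can be ``stuck''; with $k-1 \ge |\delta|+2$ slices available at least one pumping goes through, contradicting the skeleton assumption. Hence $h \le h^* = |S|^{O(1)}$ (since $|\delta| = O(|S|^3)$) and $\prod(D) \le |D| \le 2^{|S|^{O(1)}} = \expon$.

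The main obstacle lies in the skeleton case: one has to verify the structural facts that the slices $C_i$ are pairwise disjoint as multisets of transitions (so that the distinct critical $\tau_i$ can be counted against $|\delta|$) and that the subtree-swap operation always yields a well-formed derivation in the grammatical sense. Once these are established, the pigeonhole counting closes the bound.
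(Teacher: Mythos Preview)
Your approach is essentially the same as the paper's: pass to the derivation tree via Proposition~\ref{algoder}, bound the depth by a pigeonhole pumping argument, and deduce the exponential bound from binarity. One point in the simple-cycle case needs correction, however.

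Your bound $h \le |S|$ and the argument for it are not quite right. The cyclic derivation tree has a distinguished \emph{main branch} from the root $s$ to the open leaf $s$. If on your long root-to-leaf path the repeated state $t$ occurs once on the main-branch portion (at $q_1$) and once on the side portion after the divergence point (at $q_2$), then the open leaf lies in the subtree at $q_1$ but not in the subtree at $q_2$. Replacing the subtree at $q_1$ by that at $q_2$ then deletes the open leaf: the result is a fully closed derivation from $s$ (Euler defect $\{s\}$), not a cycle; and the excised slice carries \emph{two} open leaves ($s$ and $t$), so it is not a cycle either. Hence $D$ is not decomposed into two non-zero cycles and simplicity is not contradicted. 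The fix --- and this is what the paper does --- is to treat the main-branch part and the side part of each branch separately: a repetition entirely within either part yields a genuine cycle decomposition, so each state occurs at most once in each part, giving $h \le 2|S|$ rather than $|S|$. The exponential bound on $|D|$ is of course unaffected.

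For the skeleton case your argument is correct but coarser than the paper's. The paper observes directly that if a state $s$ appears $|S|+1$ times on a branch, the $|S|$ consecutive slices cannot each kill a distinct state of $\supp(D)$ (the state $s$ itself survives every excision, leaving only $|S|-1$ candidates), so some slice can be removed --- contradicting the skeleton property --- and hence $h \le |S|^2$. Your version counts critical \emph{transitions} rather than states, arriving at $h \le |S|(|\delta|+2)$; since $|\delta|$ is polynomial in $|S|$ for fixed alphabet, both yield $h = |S|^{O(1)}$. (A small notational slip: you write $\tau_i \in \supp(C_i)$, but $\supp$ is a set of states; you mean $\tau_i \in C_i$.)
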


\begin{proof}
We start with the cycle case. We consider its cyclic derivation tree
from Proposition \ref{algoder}.

If somewhere on the branch leading to $s$ (the main branch) we had
another $s$, we can easily split our cycle into two cycles (by splitting
the derivation tree). A similar thing can be done if we had some state
$t$ in two places on the main branch.

A similar operation can be done when we find the same state twice
on the side part of a branch (i.e. the part disjoint with the main branch).

Since we can use each state at most twice on each branch (once on the
main part and once on the side part), this limits the
size of a simple cycle to exponential in size of $G$.

The construction for skeletons is similar. Indeed, 
consider a skeleton run. If a state $s$ appears $|S|+1$ times on
a branch of a production tree, it means that there exist two 
consecutive appearances
of $s$ such that the part of tree between them can be cut off without
removing any state from the support of this skeleton (otherwise
each such state would have to be different and we would have $|S|+1$ states
in total).
\qed\end{proof}

\begin{theorem}[``normal form'']\label{grammarnormalform}
Let $\alph = \{a, b\}$, and $G$ be a commutative grammar over $\alph$.
Then $\prod(G) = \sum_{i\in I} Z_i$, where 
$|I| = |S|^{O(1)}$, and $Z_i$ are $A,B$-frames,
where $A , B = \expon$.
\end{theorem}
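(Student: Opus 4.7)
The plan is to adapt Theorem \ref{theoreg} from the regular case to general commutative grammars, replacing polynomial bounds by the exponential ones of Lemma \ref{exponlem}, and then exploit the 2-dimensionality of $\alph$ to collapse the resulting family of linear sets into $|S|^{O(1)}$ canonical $A,B$-frames indexed by pairs of states.

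First I will lift the decomposition step of Theorem \ref{theoreg} to arbitrary commutative grammars. Given a run $D$, iteratively extract simple cycles from $\Cyc_{\supp(D)}$ whose removal preserves the support, terminating in a skeleton run $D_0$ and yielding $D = D_0 + \sum_{C \in \Cyc_{\supp(D_0)}} \gamma_C C$. By Lemma \ref{exponlem} both $\|\prod(D_0)\|$ and $\|\prod(C)\|$ for each simple cycle are of magnitude $\expon$, so the mass-transfer argument of Theorem \ref{theoreg} carries over with every polynomial bound replaced by an exponential one: after shifting multiplicities via Lemma \ref{wrilemma}, the set $P$ of cycles carrying large coefficients may be assumed linearly independent, which for $\alphs = 2$ forces $|P| \le 2$. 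The conclusion is that every $K \in \prod(G)$ admits a representation $K = \prod(D_1) + \gamma_a Y_a + \gamma_b Y_b$ with $\|\prod(D_1)\| = \expon$, $Y_a, Y_b \in \OCyc_{\supp(D_1)}$ based at two states $s_a, s_b \in \supp(D_1)$, and $\gamma_a, \gamma_b \in \bbN$.

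Next I will replace the a priori arbitrary heavy pair $(Y_a, Y_b)$ by canonical extremes that depend only on the states. For each $s \in S$, fix $\tilde v_a(s), \tilde v_b(s) \in \OCyc_s$ achieving respectively the maximal and minimal $a$-to-$b$ ratio among simple cycles from $s$. Since $\alphs = 2$, every $Y \in \OCyc_s$ lies in the rational cone spanned by $\tilde v_a(s), \tilde v_b(s)$, so by Lemma \ref{detlemma} applied to the matrix $M$ with columns $\tilde v_a(s_a), \tilde v_b(s_a)$ there exist $w_1, w_2 \in \bbN$ with $|\det M|\cdot Y_a = w_1 \tilde v_a(s_a) + w_2 \tilde v_b(s_a)$. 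Writing $\gamma_a = q\cdot|\det M| + r$ with $0 \le r < |\det M|$, I replace $\gamma_a Y_a$ by $q w_1 \tilde v_a(s_a) + q w_2 \tilde v_b(s_a) + r Y_a$; since $|\det M| = \expon$ and $\|Y_a\| = \expon$, the leftover $r Y_a$ has norm $\expon$ and is absorbed into $D_1$. The analogous substitution is performed for $Y_b$. After this move, the heavy directions depend solely on $(s_a, s_b) \in S \times S$.

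Finally, for each $(s_a, s_b) \in S \times S$ I will define the $A,B$-frame $Z_{s_a, s_b} = W_{s_a, s_b} + M_{s_a, s_b} \bbN^\alph$, where $M_{s_a, s_b}$ has columns $\tilde v_a(s_a), \tilde v_b(s_b)$ (so $B = \expon$) and $W_{s_a, s_b} \subseteq [0..A]^\alph$ for $A = \expon$ collects all $\prod(D_1)$ produced by the above decomposition with this pair of heavy directions. The inclusion $\prod(G) \subseteq \bigcup_{s_a, s_b} Z_{s_a, s_b}$ is exactly the decomposition just established, while the reverse inclusion follows from Fact \ref{algoder}: arbitrarily many copies of simple cycles based at $s_a, s_b$ may be spliced into any run $D_1$ whose support contains $s_a, s_b$. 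The indexing set has size $|S|^2 = |S|^{O(1)}$, as required. I expect the main obstacle to be the extremality-replacement step: controlling the bounded correction so that it is genuinely absorbable into $D_1$ within the exponential budget for $A$, so that the bundles $W_{s_a, s_b}$ really do lie in $[0..A]^\alph$ and the overall family of $A,B$-frames stays polynomial in $|S|$.
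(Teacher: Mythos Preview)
Your approach is close in spirit to the paper's but has a genuine gap in the canonicalization step. After your substitution in the second paragraph, the heavy part of $K$ is
\[
q_a w_1 \,\tilde v_a(s_a) + q_a w_2 \,\tilde v_b(s_a) + q_b w_1' \,\tilde v_a(s_b) + q_b w_2' \,\tilde v_b(s_b),
\]
which involves \emph{four} canonical directions, not two. Yet in the third paragraph you set $M_{s_a,s_b}$ to have only the two columns $\tilde v_a(s_a)$ and $\tilde v_b(s_b)$, silently discarding $\tilde v_b(s_a)$ and $\tilde v_a(s_b)$. These dropped directions need not lie in the cone spanned by the kept ones: if the cycle outputs from $s_a$ cover a wider range of $b$-ratios than those from $s_b$, then $\tilde v_b(s_a)$ is strictly more $b$-extreme than $\tilde v_b(s_b)$ and cannot be written as a non-negative combination of $\tilde v_a(s_a)$ and $\tilde v_b(s_b)$. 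So the inclusion $\prod(G)\subseteq\bigcup Z_{s_a,s_b}$ fails as written. The root cause is that the states $s_a,s_b$ you obtain are merely the base states of whichever two cycles happen to survive the mass transfer from Theorem~\ref{theoreg}; they carry no extremality information about $\OCyc_{\supp(D)}$ as a whole.

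The paper avoids this by never passing through arbitrary heavy cycles. For a run $D$ with support $T$, it picks from the outset the \emph{global} extremes $Y^a,Y^b$ over all of $\OCyc_T$ (smallest and largest $b$-ratio). Every $Y\in\OCyc_T$ then lies in their non-negative cone, so reducing each coefficient $\alpha_Y$ modulo $\det M$ (with columns $Y^a,Y^b$) and pushing the surplus into $Y^a,Y^b$ via Lemma~\ref{detlemma} leaves exactly these two heavy directions and a bounded remainder. The observation that yields $|I|=|S|^{O(1)}$ is that $Y^a$, being the global $a$-extreme over $\OCyc_T$, is in particular the $a$-extreme within $\OCyc_{s^a}$ for the state $s^a\in T$ where it is attained (and likewise for $Y^b$); hence the pair $(Y^a,Y^b)$ is determined by a pair of states. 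Your detour through Theorem~\ref{theoreg} is unnecessary here and is precisely what introduces the non-canonical $s_a,s_b$; the fix is to index by the states realizing the global extremes of $\OCyc_{\supp(D)}$ rather than by the base states of the surviving heavy cycles.
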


\begin{proof}

Let $D$ be a run of $G$, and $T = \supp(D)$. For $Y \in \OCyc_T$,
let $b(Y) = Y(b) / |Y|$; let $Y^a$ and $Y^b$ be the elements of
$\OCyc_T$ with the smallest and largest $b(Y)$, 
respectively. We have
$|Y^a|, |Y^b| = \expon$ from Lemma \ref{exponlem}.
There are at most $|S|^2$ possible pairs $(Y^a, Y^b)$. Let
$R(Y^a, Y^b)$ be the set of runs having particular $Y^a$ and $Y^b$.
We will show that $\prod(R(Y^a, Y^b))$ is of form 
$W + M \bbN^\alph$, where the columns of $M$ are $Y^a$ and $Y^b$.

We decompose $D$ as $D_2 + \sum_{C \in \Cyc_T} \alpha_C C$,
where $D_2$ is a skeleton. Thus, $\prod(D)$ is decomposed as
$\prod(D_2) + \sum_{Y \in \OCyc_T} \alpha_Y Y + \beta_a Y^a +
\beta_b Y^b$, where $\beta_a, \beta_b, \alpha_Y \in \bbN$.
We can assume that each $\alpha_Y < \det M$, because otherwise we
can replace $(\det M) Y$ by $q_a Y^a + q_b Y^b$, where $q_a, q_b \in \bbN$
(the coefficients are integers from Lemma \ref{detlemma} and non-negative
since $Y^a$ and $Y^b$ are extreme cycles). Each $Y$ and $\prod(D_2)$ is
$\expon$ from Lemma \ref{exponlem}, and there
are $\expon$ possible $Y$'s, thus 
$D_3 = \prod(D_2) + \sum_{c \in \prod(\Cyc_T)} \alpha^c c$ satisfies
$||D_3|| = \expon$.

By taking for $W$ the sets of possible $D_3$ for all runs from
$R(Y^a, Y^b)$, we get the required conclusion.
\qed\end{proof}

\begin{theorem}\label{maintheorem}
Let $G_1$ and $G_2$ be two commutative grammars over $\alph = \{a, b\}$.
Then the problem of deciding $\prod(G_1) \subseteq \prod(G_2)$ is
\PiP-complete.
\end{theorem}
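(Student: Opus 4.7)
The plan is to split the statement into the upper and lower bounds. \PiP-hardness is inherited from the hardness of inclusion for commutative context-free grammars already in the unary case (Huynh, see the discussion and table above), so I would concentrate on containment in \PiP.

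My main idea for the upper bound is to compress the infinite universal quantification ``$v \in \prod(G_1) \Rightarrow v \in \prod(G_2)$ for every $v \in \bbN^\alph$'' to a single-exponential region. First I would apply Theorem \ref{grammarnormalform} to both $G_1$ and $G_2$, obtaining decompositions $\prod(G_k) = \bigcup_{i \in I_k} Z^k_i$, where each $Z^k_i$ is an $A,B$-frame with $A, B = \expon$ and $|I_k|$ polynomial in $|S|$. Next, I would invoke Lemma \ref{formcollapse} on the combined family of frames (still polynomially many); it produces, for every $v \in \bbN^\alph$, a vector $v' \in \bbN^\alph$ with $||v'|| = \expon$ such that $v \in Z^k_i$ iff $v' \in Z^k_i$ for every $k$ and $i$, and hence $v \in \prod(G_k)$ iff $v' \in \prod(G_k)$. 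Consequently $\prod(G_1) \subseteq \prod(G_2)$ holds iff the restricted containment holds for all $v$ with $||v||$ bounded by some single-exponential $N$.

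Such $v$ have polynomial-size binary encodings, and membership $v \in \prod(G_k)$ is in NP: a short witness is the transition-multiplicity vector $D$ of a commutative run, whose validity reduces via Proposition \ref{algoder} to the polynomial-time checkable Euler condition, connectedness, and the equality $\prod(D) = v$. The whole decision procedure then takes the form
\[
\forall v, w_1 \, \exists w_2 : \Phi(G_1, G_2, v, w_1, w_2),
\]
where $\Phi$ is polynomial-time and asserts that either $w_1$ is not a valid run for $v$ in $G_1$ or $w_2$ is a valid run for $v$ in $G_2$; this is exactly a \PiP\ formulation. The main subtlety I expect is the simultaneous collapse step: Lemma \ref{formcollapse} must be applied to the union of the two frame collections so that the single $v'$ it produces resolves membership in both grammars at once. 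This works because the lemma is uniform over arbitrary finite families of $A,B$-frames, and the combined collection still has polynomial size, keeping the bound $N$ at single exponential.
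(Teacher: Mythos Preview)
Your proposal is correct and follows essentially the same route as the paper: apply Theorem \ref{grammarnormalform} to both grammars, use Lemma \ref{formcollapse} on the combined collection of frames to reduce the universal quantifier to a single-exponential range, and then express the bounded check as a $\forall\exists$ sentence using NP membership witnesses. The only point the paper makes explicit that you gloss over is that the run witness $D$ must be chosen without non-productive cycles (cycles $C$ with $\prod(C)=0$, which can always be removed) so that its encoding as a vector in $\bbN^\delta$ has polynomial bit-length; once this is noted, your argument goes through verbatim.
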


\begin{proof}
The problem is \PiP-hard because we can reduce the problem of
semilinear set inclusion \cite{semipi} to it.

Using Theorem \ref{grammarnormalform}, we can write each
$\prod(G_k)$ as $\bigcup_{i \in I_k} Z^k_i$, where $I_k$ is a polynomial
set of indices and $Z^k_i$ is a $A,B$-frame, where $A$ and $B$ are
$\expon$.

From Lemma \ref{formcollapse} we get that it is enough to check inclusion
on vectors $v \in \bbN^\alph$ of size $|v| < P = O((A+B)^{O(|I_1|+|I_2|)})$.
We call such vectors small vectors.

A witness for membership of $v$ in a grammar $G$ is a run
$D$ such that $\prod(D) = v$, and $|v| < P$. If $v$ is small, 
and $D$ does not contain non-productive cycles (i.e., $C$ such that
$\prod(C) = 0$; such cycles can be eliminated),
then it can be described as a string of length polynomial in size
of $G$. We call such witness a small witness.

For each $v$, and each small witness of membership of $v$ in $G_1$, we have
to find a small witness of membership of $v$ in $G_2$. This can be done
in \PiP.
\qed\end{proof}

\section{Normal form over larger alphabets?}\label{secmore}

In Theorem \ref{maintheorem} we assumed that we are working with an
alphabet of two letters. Does a similar statement hold for alphabets of
size 3, 4, $\ldots$? What about alphabets of unfixed size?

For 2 letters, we have generated all multisets generated by our grammar
from runs $D$ having specific $\supp(D)$ using two extreme cycles, which
led to generating $\prod(D)$ using $N^2$ pairs of extreme cycles in total
 --- Theorem
\ref{grammarnormalform}. A natural
conjecture is that a similar normal form exists for greater alphabets,
except that there would be a polynomial ($N^d$) number of extreme cycles now
--- this would give us a straightforward generalization of Theorem
\ref{grammarnormalform}, and thus also of Theorem \ref{maintheorem},
by combining with a generalization of Lemma \ref{formcollapse}. However,
this is not true; in fact, Theorem \ref{grammarnormalform} already fails
for a three letter
alphabet. We present this counterexample, because we think it is interesting.

\begin{theorem}\label{counterthree}
There exists a context-free grammar $G$ over $\{x,y,z\}$ such that
$\prod(G)$ is not a union of a polynomial number of $A,B$-frames.
\end{theorem}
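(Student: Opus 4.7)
The plan is to exhibit a specific context-free grammar $G$ over $\{x,y,z\}$ with $|S| = O(n)$ states such that the set of cycle outputs $\OCyc \subseteq \bbN^\alph$ contains $N = 2^{\Omega(n)}$ pairwise non-collinear vectors $v_1, \ldots, v_N$ that are all extreme rays of the cone $\bbQp \OCyc$, and then to argue that any decomposition $\prod(G) = \bigcup_{j \in I} Z_j$ into $A,B$-frames must allocate essentially one frame per extreme ray. Since each frame contributes only $|\alph| = 3$ asymptotic directions, this forces $|I|$ to be super-polynomial in $|S|$, no matter how large $A$ and $B$ are.

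For the frame lower bound, suppose toward a contradiction that $|I| = |S|^{O(1)}$, with $Z_j = W_j + M_j \bbN^\alph$. Each extreme cycle $C_i$ may be iterated inside a run $D_i$ whose support contains $\supp(C_i)$, producing $\prod(D_i) + k \prod(C_i) \in \prod(G)$ for every $k \in \bbN$. By pigeonhole some fixed frame $Z_{j(i)}$ contains infinitely many of these ray points; a second pigeonhole over the finite offset set $W_{j(i)}$, together with a standard limit argument on the resulting multiplicities in $\bbN^\alph$, forces $\prod(C_i) \in M_{j(i)} \bbQp^\alph$. Because $\prod(C_i)$ is extreme in the larger cone $\bbQp \OCyc \supseteq M_{j(i)} \bbQp^\alph$, it is also extreme in the sub-cone, hence proportional to one of the three columns of $M_{j(i)}$. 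The assignment $i \mapsto j(i)$ is therefore at most three-to-one, yielding $|I| \geq N/3 = 2^{\Omega(n)}$, a contradiction.

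For the construction, I would exploit the doubling capability of context-free grammars. Using nonterminals $U_0, \ldots, U_n$ with rules $U_i \to U_{i-1} U_{i-1}$ one can derive $2^n$ leaves from a single $U_n$, and adding alternative ``closure'' rules that deposit a $z$ at various internal levels gives exponentially many distinct closure patterns. By interleaving such chains for $x$ and $y$ with an auxiliary component that couples the $z$-coordinate nonlinearly to the depths at which closures occur, one arranges the cycle outputs to lie in convex position on a curved surface in the positive octant, yielding $2^{\Omega(n)}$ lattice points each of which is an extreme ray of $\bbQp \OCyc$. Extremeness then reduces to a direct convexity check: no $v_i$ equals a non-negative combination of the others.

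The main obstacle is precisely this construction. Naive doubling alone produces a single cycle-output direction, and independent binary closures at $2^n$ leaves sweep across only a single line segment in the $(x,y)$-plane, whose convex hull has merely two extreme points. The crucial trick must make the $z$-coordinate vary as a strictly convex function of the combinatorial choices, so that the resulting Parikh vectors approximate a parabola-like curve and thereby acquire exponentially many corners of their convex hull; checking that a polynomial-size context-free grammar can realize such a configuration of cycle outputs is the heart of the proof.
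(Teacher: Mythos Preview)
Your approach is the paper's: produce a grammar whose Parikh image is a cone with $2^{\Omega(n)}$ extreme rays, then argue each $A,B$-frame can absorb at most $|\alph|=3$ of them. Your pigeonhole/limit argument forcing every extreme direction to be a column direction of some $M_j$ is correct and in fact more explicit than the paper's one-line ``hence''; the inclusion $M_j\bbQp^\alph\subseteq\bbQp\OCyc$ you invoke does hold, since each column of $M_j$ lies in the recession cone of $\prod(G)$ and every run decomposes as a bounded skeleton plus cycles.

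The gap is exactly where you place it: you have not built the grammar, and one detail of your sketch is inverted. The letter $z$ is \emph{not} the convex coordinate; it is a pure counter. The start rule is $S\to 0 \mid S\,ABCDE\,z$, so every simple cycle from $S$ has $z$-coordinate $1$, and $\prod(G)$ is the cone over whatever $(x,y)$-polygon the block $ABCDE$ generates. The convexity lives entirely in $(x,y)$: each state in the chain has two rules, namely $A\to B^2\mid C^2D^2E^2\,xy$, $B\to C^2\mid D^2E^2\,x^2y$, $C\to D^2\mid E^2\,x^4y$, $D\to E^2\mid x^8y$, $E\to 0\mid x^{16}y$. The $2^5$ pure strategies produce the points $\bigl(c\cdot y(y+1)/2,\ y\bigr)$ for $y=0,\ldots,31$, which lie on a parabola and are therefore all vertices of their convex hull. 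With $n$ states in the chain one obtains a $2^n$-gon from a grammar of size $O(n)$. The paper also supplies an alternative inductive construction landing on the parabola $(y(N-y),y)$, should you prefer not to verify the combinatorics of the first grammar by hand.
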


\begin{proof}
Consider the following grammar $G$
(in the standard commutative grammar notation, with exponential restrictions on
the size of productions):

\[
\begin{array}{cclcl}
S & \ra & 0 & | & SABCDEz \\
A & \ra & B^2 & | & C^2 D^2 E^2 x y \\
B & \ra & C^2 & | & D^2 E^2 x^2 y \\
C & \ra & D^2 & | & E^2 x^4 y \\
D & \ra & E^2 & | & x^8 y \\
E & \ra & 0   & | & x^{16} y
\end{array}
\]

The state $S$ generates any number of $z$'s together with the same number
of $ABCDE$'s. $ABCDE$ generates a convex 32-gon on the surface $\bbN^{\{x,y\}}$
(we get 32 corners by deciding which transition always
to use for each of five states $A$, $B$, $C$, $D$, $E$; they are points
with coordinates $({\frac{y(y+1)}{2}},y)$ for $y \in [0..31]$).
Since we generate $z^n$ together with $(ABCDE)^n$,
$\prod(G)$ is a cone (i.e., a unbounded pyramid) with 32 edges (each edge is the line
$\{(\frac{zy(y+1)}{2}, zy, z\}): z \in \bbR\}$ for some $y$),
and hence we need more than 16 three-dimensional $A,B$-frames to cover
$\prod(G)$.
This example generalizes to any number of states (bigger examples are
constructed using the same simple rule as the example above) --- we need more
than $2^{n-2}$ $A,B$-frames for a grammar with $n$ states and two transitions
for each state. Note that the $n$-state verison of the grammar above can
be written in the limited form (i.e., for each derivation $\mv{s}{a}{t}$,
$|a|\leq 1$, $|t|\leq 2$) using $O(n)$ states.

Instead of proving that this construction gives a good counterexample for
each $n$ (i.e., it indeed generates a $2^n$-gon), 
we present another construction, based on the same idea (although we
don't get as beatiful grammar as above, the proof is simpler).

For each $n \in \bbN$, we will generate a grammar $G_n$ with $n$ states
over $\{x, y\}$, two transitions for each state, for which the set of
vertices of the convex hull of
$\prod(G)$ is the set of points
with coordinates $(y(N-y),y)$ for each odd $y \in [0..N]$,
where $N = 2^{n+1}$. It is easy to find $G_1$; we will now show how to
construct $G_{n+1}$ using $G_n$. We perform the following steps.

\begin{itemize}
\item We add $x^{N^2}$ to be always generated right away from the start
symbol $S$ (i.e., to both transitions from $S$).

\item Whenever we generate an $y$ using some transition, we
additionally generate $x^{-N}$, ignoring (for now) the fact that $-N$ is
negative. 
Our vertices are now $(f(y), y)$, where $f(y) = 
y(N-y-N)+N^2$ for
$y$ as before. Note that $f(y) = (N-y)(N+y)$.

\item We replace all occurences of $y$ in our grammar
with a new symbol $Y$, with two rules:
$Y \ra y | y^{-1}$. Our vertices are now still $(f(y), y)$,
except that now $y$ is now in range $[-N..N]$.

\item We add $y^N$ to be always generated right away from $S$.
Now, our vertices are $(g(y), y)$, for each odd $y$ in $[0..2N]$,
and $g(y) = f(y-N) = (2N-y)y$.

\item Thus, we have $G_{n+1}$, except that our grammar is improper due to
negative transitions.
However, it is easy to ``normalize'' our grammar:
if it is possible to generate, say, $x^{-a}$ from a 
non-initial state $A$,
add $x^a$ to the right side of each transition from $A$ (thus eliminating
$x^{-a}$), and replace each occurence of $A$ on the right side of some
transition with $Ax^{-a}$. Since the grammar is acyclic, and the initial
state $S$ never generates a negative number of any terminal,
this algorithm will eventually eliminate all the negative transitions.\qed
\end{itemize}
\end{proof}

\section{Inclusion for fixed alphabets over more than 2 letters}

The proof of the generalization Theorem \ref{maintheorem} to alphabets of
larger (but still fixed) size is very long and technical. We had to 
omit most proofs for space reasons.

In this proof, we will require lots of constants; some of them are
dependant on other. To keep our constants ordered, and make sure that
there is no circular reference between them, we will name them
consistently $C_1, C_2, C_3, \ldots$ through the whole section; each constant
will be defined in such a way that it will depend single exponentially on the
size of the grammar and/or polynomially on the lower numbered constants.
By induction, all numbered constants depend single exponentially on the size of
the grammar. As usual, when we say \emph{$C_i$ is polynomial in $C_j$},
we assume that the size of alphabet $\alph$ is fixed. (If $\alphs$ is not
fixed, then $C_i = O(C_j^{p(\alphs)})$, where $p$ is a polynomial.)

By $\osum A B$ we denote $\{\sum_{b \in B} \alpha_b b : \forall b\in B \ \alpha_b \in A\}$.

Let $J_\alph = \{x \in \bbR^\alph: x \geq 0, |x| = 1\}$.

Let $F_\maxcycle \subseteq [0..\macco]^1_\alph$ (i.e., a set of some 
linear functions over $\bbN^\alph$ with integer coefficients up to $\macco$)
be such that for each set of
$\alphs-1$ vertices $V \subseteq [0..\maxcycle]^\alph$, there exists a 
non-zero $f \in F_\maxcycle$
such that $fV = 0$. This can be done with $\macco$ polynomial in $\maxcycle$.

Let $L_\regval = \{0, \regval\}^\alph$ be the set of vertices of the hypercube of
dimension $\alphs$ and edge length $\regval$.

Let $\regf(\maxcycle,\regval)$ be the set of functions from $F_\maxcycle \times L_\regval$ to $\{-1,0,1\}$.

For a $r \in \regf$, let
\begin{eqnarray*}
\reg(r) &=& \left\{v \in \bbN^\alph: \begin{array}{l}\forall f\in F_\maxcycle \forall l\in L_\regval \\ \sgn (fv-fl) = r_{f,l} \end{array} \right\}, \\
\Reg(r) &=& \left\{v \in \bbR^\alph: \begin{array}{l}\forall f\in F_\maxcycle \forall l\in L_\regval \\ \sgn (fv-fl) \in \{0, r_{f,l}\} \end{array} \right\}, \\
\tau(r) &=& \left\{x \in \bbR^\alph: \begin{array}{l}x \geq 0, x \neq 0, \\ \forall f\in F_\maxcycle \forall l\in L_\regval \\ \sgn (fx) \in \{0,r_{f,l}\} \end{array} \right\}.
\end{eqnarray*}

The following picture (Figure A) shows what $J_\alph$ and $\tau(r) \cap J_\alph$ look
like for $\maxcycle = 2$ and $\alphs = 3$. (If $t \in \tau(r),$ then also
$xt \in \tau(r)$ for $x>0$; thus, a cross of $\tau(r)$ and $J_\alph$ gives
us information about the whole $\tau(r)$.)

\begin{center}
\includegraphics{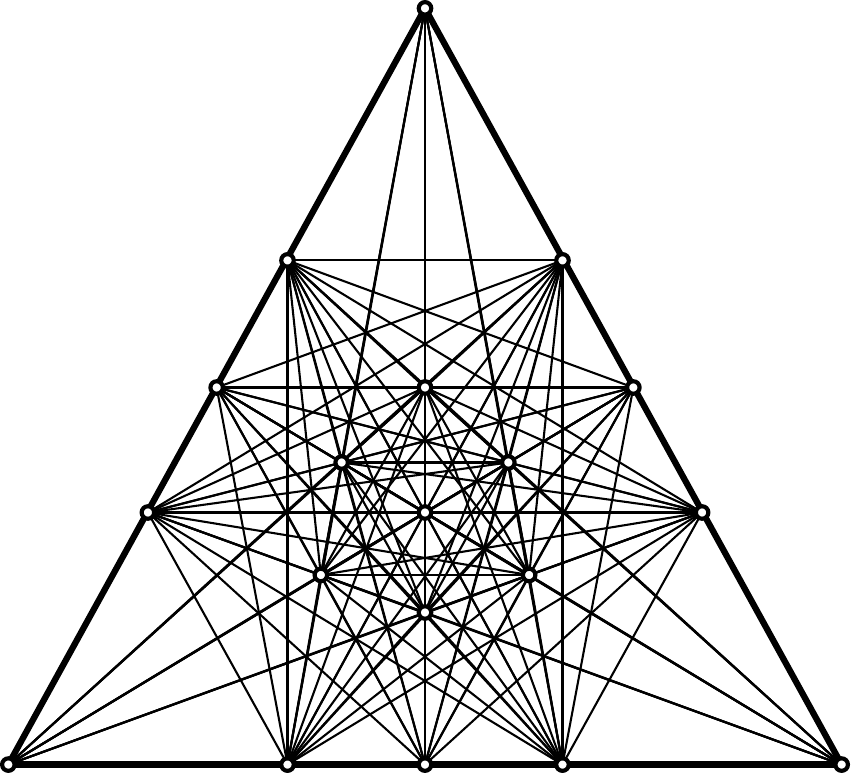}

Figure A
\end{center}

The big equilateral triangle is $J_\alph$. The 19 small white circles are
points $v/|v|$ for $v \in [0..\maxcycle]^\alph$. We connect each pair of
points with a line; these lines correspond to elements of $F_\maxcycle$.

For each $r$, $\tau(r) \cap J_\alph$ is a part of the triangle defined
by their relationship with each line (above, below, or on the line).
Thus, each $\tau(r)$ is either an empty set, or one of the points where
lines cross (including the 19 circles), or a line segment between two
consecutive points where the lines cross, or a polygon bounded by lines.

What does the subdivision of $\bbN^\alph \cap K J_\alph$ into regions 
($\reg$ and $\Reg$) look like for a large $K$? The picture would be similar
to the picture of $\tau(r) \cap J_\alph$, except that instead of each line we
would have a bundle of lines corresponding to picking different elements of
$L_\regval$. In case of $|\alph|=2$ the subdivision of $\bbN^\alph$ into
regions is similar to the picture from Lemma \ref{anglelemma} (the bundles of
lines are no longer infinite).

\begin{lemma}\label{subtraction}
Let $\maxcycle, \regval, \cyclemul \in \bbN$.
There exist constants $\reggen$ polynomial in 
$\maxcycle$, and $\regorder$ polynomial in $\maxcycle$, $\regval$ and
$\cyclemul$, such
that for each $r \in \regf(\maxcycle,\regval)$, for each $v \in \reg(r)$,
if $||v|| \geq \regorder$, then $v = v_0 + \cyclemul t$, where
$t \in \tau(r) \cap [0..\reggen]^\alph$ and $v_0 \in \reg(r)$.
\end{lemma}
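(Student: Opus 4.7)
The plan is to present $\reg(r)$ as the lattice points of a rational polyhedron $P_r$ whose recession cone equals $\tau(r) \cup \{0\}$, produce a short Hilbert basis of $\tau(r) \cap \bbN^\alph$, and then apply an integer Carath\'eodory-style decomposition to $v$ in order to expose one basis vector that appears with multiplicity at least $\cyclemul$ and can therefore be subtracted.

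The first step is bookkeeping: $\reg(r) = P_r \cap \bbN^\alph$, where $P_r$ is the closed rational polyhedron cut out by $v \geq 0$, by $fv = fl$ when $r_{f,l}=0$, by $fv \geq fl+1$ when $r_{f,l}=+1$, and by $fv \leq fl-1$ when $r_{f,l}=-1$; all coefficients are bounded by $O(\macco)$ and all right-hand sides by $O(\macco\regval)$. A case analysis on each $f \in F_\maxcycle$, intersecting the admissible signs $\{0, r_{f,l}\}$ over all $l \in L_\regval$, identifies the recession cone of $P_r$ with $\tau(r) \cup \{0\}$. Next I would apply an effective Hilbert basis theorem for rational cones in fixed dimension to obtain generators $G = \{g_1, \ldots, g_k\} \subseteq [0..\reggen]^\alph$ of $\tau(r) \cap \bbN^\alph$, with $\reggen$ polynomial in $\maxcycle$ and $k$ bounded in terms of $\macco$. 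Then I would invoke integer Carath\'eodory for $P_r$ to write any $v \in \reg(r)$ in the form $v = u + \sum_{i=1}^k \alpha_i g_i$ with $\alpha_i \in \bbN$, $u \in P_r \cap \bbN^\alph = \reg(r)$, and $||u|| \leq \regorder_0$ for some $\regorder_0$ polynomial in $\maxcycle$ and $\regval$. Setting $\regorder := \regorder_0 + \cyclemul k \reggen$, the hypothesis $||v|| \geq \regorder$ forces $\sum_i \alpha_i ||g_i|| \geq \cyclemul k \reggen$, so some $\alpha_{i_0} \geq \cyclemul$; taking $t := g_{i_0}$ gives $t \in \tau(r) \cap [0..\reggen]^\alph$, and
\[ v - \cyclemul t \;=\; u + (\alpha_{i_0}-\cyclemul)g_{i_0} + \sum_{j \neq i_0} \alpha_j g_j \;\in\; u + (\tau(r) \cup \{0\}) \cap \bbN^\alph \;\subseteq\; P_r \cap \bbN^\alph = \reg(r), \]
so one may set $v_0 := v - \cyclemul t$.

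The main obstacle is securing the effective Carath\'eodory decomposition with both $\reggen$ and $\regorder_0$ polynomial in the input data rather than merely finite. The qualitative existence is classical, but explicit polynomial control requires more work; I would obtain it directly by triangulating the cone $\tau(r)$ into simplicial subcones, each spanned by $\alphs$ vectors from the Hilbert basis, and applying Lemma \ref{detlemma} to bound the fundamental parallelepiped of each subcone by the determinant of the corresponding matrix, which is polynomial in $\maxcycle$. Any $v \in P_r$ then decomposes as a vertex of $P_r$ (polynomial in $\maxcycle$ and $\regval$) plus a point of such a fundamental parallelepiped plus an $\bbN$-combination of basis vectors, so $||u||$ is polynomially bounded as required, while the triangulation also certifies that every Hilbert-basis generator lives in $[0..\reggen]^\alph$ with $\reggen$ polynomial in $\maxcycle$ alone.
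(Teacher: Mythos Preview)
Your argument is correct and follows the same skeleton as the paper's proof: identify the recession cone of the region with $\tau(r)\cup\{0\}$, decompose $v$ into a bounded part plus a non-negative combination of finitely many short integer directions in $\tau(r)$, and observe that largeness of $||v||$ forces one coefficient to exceed $\cyclemul$. The packaging differs. The paper works over $\bbR$ with the closed region $\Reg(r)$ and takes for its directions the (rationally scaled) vertices $t_1,\ldots,t_D$ of the polytope $\tau(r)\cap J_\alph$, writing $v=v_{00}+\sum_i\alpha_i t_i$ with $v_{00}$ in the bounded set $R^*=\Reg(r)\setminus(\Reg(r)+\tau(r))$ and then subtracting $\cyclemul q_it_i$. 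You instead pass to the integer-tight polyhedron $P_r$ obtained by replacing each strict constraint $fv>fl$ by $fv\geq fl+1$, so that $P_r\cap\bbN^\alph=\reg(r)$ on the nose, and then invoke a Hilbert basis of $\tau(r)\cap\bbN^\alph$ together with a triangulation/Carath\'eodory argument. Your route makes the integrality of $v_0$ and the membership $v_0\in\reg(r)$ completely transparent, at the cost of citing heavier machinery; the paper's route is shorter and more self-contained but leaves the passage from $v_0\in\Reg(r)$ back to $v_0\in\reg(r)$ somewhat implicit.

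One small imprecision worth fixing: in your last paragraph you say that $v$ decomposes as ``a vertex of $P_r$'' plus a parallelepiped point plus an $\bbN$-combination. In general one only gets a point of $\mathrm{conv}(V)$, not a single vertex, in the bounded slot (write $v=q+c$ with $q\in\mathrm{conv}(V)$, $c$ in the recession cone, then round the coefficients of $c$ in a simplicial subcone). This does not affect your bound on $||u||$, since $||q||\le\max_j||v_j||$, but the sentence as written is not literally correct.
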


Note that a $\alphs$-dimensional version of Lemma \ref{anglelemma} follows
easily from Lemma \ref{subtraction}. Lemma \ref{subtraction} also plays a
similar role in our proof as Lemma \ref{anglelemma} did for $d=2$.

\begin{lemma}\label{separation}
Let $\reggen \in \bbN$. Then there exists $\csep$ polynomial in $\reggen$ such that:

Let $P, Q \subseteq [0..\reggen]^\alph$ such that
$\osum\bbP P \cap \osum\bbP Q = \{0\}$, and $0 \notin P,Q$.
Then there is a $\Phi \in \bbN^1_\alph$ 
(i.e.,~a linear function over $\bbN^\Sigma$ with integer coefficients)
such that $\Phi P > 0$, $\Phi Q < 0$,
and $|\Phi| \leq \csep$.
\end{lemma}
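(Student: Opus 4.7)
The statement is a quantitative version of the hyperplane separation theorem for finitely generated convex cones with trivial intersection. The plan is to extract a small-coefficient integer separator via a vertex-of-polyhedron argument, with the size bound coming from Cramer's rule.

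First I would set up the real separation. Since $P$ and $Q$ are finite, the conic hulls $\{\sum_{p\in P}\alpha_p p:\alpha_p\geq 0\}$ and $\{\sum_{q\in Q}\beta_q q:\beta_q\geq 0\}$ are closed polyhedral cones in $\bbR^\alph$, and the hypothesis $\osum\bbP P\cap\osum\bbP Q=\{0\}$ (combined with the density of non-negative rationals in non-negative reals) upgrades to disjointness of these real cones away from the origin. Classical cone separation (Farkas' lemma) then yields some $\Phi^{*}\in\bbR^\alph$ with $\Phi^{*}p>0$ for all $p\in P$ and $\Phi^{*}q<0$ for all $q\in Q$.

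Next I would pin down a low-complexity $\Phi$ by a vertex argument. After rescaling $\Phi^{*}$ by a suitable positive constant, the polyhedron $\{\Phi\in\bbR^\alph:\Phi p\geq 1\ \forall p\in P,\ \Phi q\leq -1\ \forall q\in Q\}$ is nonempty. If $P\cup Q$ spans $\bbR^\alph$ this polyhedron has a vertex $\Phi$, i.e.\ one made tight by $\alphs$ of its defining inequalities, so $\Phi v_i=\pm 1$ for some $v_1,\ldots,v_{\alphs}\in P\cup Q\subseteq[0..\reggen]^\alph$. In the degenerate case where $P\cup Q$ does not span $\bbR^\alph$, I would restrict to the linear subspace spanned by $P\cup Q$, run the argument there, and extend the resulting functional to $\bbR^\alph$ by zero on an orthogonal complement.

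Finally I would apply Cramer's rule: each coordinate of $\Phi$ is a ratio of two $\alphs\times\alphs$ determinants whose entries lie in $[-\reggen..\reggen]$, hence is rational with numerator and denominator both bounded by $\alphs!\,\reggen^\alphs$ in absolute value. Multiplying through by the common denominator (and flipping sign if needed) yields an integer functional $\Phi'$ satisfying $\Phi'p\geq 1>0$ on $P$ and $\Phi'q\leq -1<0$ on $Q$, with $|\Phi'|\leq\alphs\cdot\alphs!\,\reggen^\alphs$. Setting $\csep$ to this quantity, which is polynomial in $\reggen$ for fixed $\alphs$, completes the argument.

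The main obstacle will be arranging the size bound to depend polynomially only on $\reggen$ (with $\alphs$ fixed) and not on $|P|+|Q|$: this is precisely why the vertex/Cramer route is needed, rather than any naive construction iterating over pairs or faces. The degenerate spanning case is a routine technicality, handled by descent to the span of $P\cup Q$.
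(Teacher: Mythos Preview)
Your plan is correct and complete: the Farkas/vertex/Cramer pipeline is the standard way to extract a small integer separator, and you have correctly isolated why the bound depends only on $\reggen$ and $\alphs$ (the vertex is pinned down by $\alphs$ constraints with entries in $[0..\reggen]$, independently of $|P|+|Q|$). The degenerate case is indeed routine, handled as you say by passing to the span.

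The paper takes a genuinely different route. Instead of an LP-vertex argument, it works metrically: it intersects the two cones with the simplex $J_\alph$, obtaining two disjoint compact polytopes, and picks the pair of points $x_p,x_q$ realising the minimum Euclidean distance between them; the separating functional $\Phi$ is then essentially the perpendicular bisector of the segment $x_px_q$. The paper then asserts, without carrying out the calculation, that this $\Phi$ has polynomially bounded coefficients. Your approach is in this sense more self-contained: Cramer's rule delivers the explicit bound $\alphs\cdot\alphs!\,\reggen^{\alphs}$ immediately, whereas the closest-point construction still needs an argument that the minimising pair lies on faces of bounded description length before a comparable bound can be read off. The paper's construction, on the other hand, is more directly geometric and produces a canonical separator (the one maximising the margin), which can be convenient if one later needs stability under perturbation; for the purposes of this lemma, however, your argument is the cleaner of the two.
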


Intuitively, this lemma states that, given two disjoint closed convex polygons
in some space (in our case, the space is $J_\alph$, and the polygons are
intersections with 
$\osum \bbP P$ and $\osum \bbP Q$), we can separate them strictly with a
hyperplane. Such separation
is a well known property of convex sets; Lemma \ref{separation} gives a
polynomial bound on the coefficients of such a separating hyperplane.

\begin{lemma}\label{matrixform}
Let $\maxcycle, \maxrun \in \bbN$. Then there exists a constant
$\bigconst$ such that:

Let $S = W + \osum\bbN \OCyc$, where $W \subseteq [0..\maxrun]^\alph$ and
$\OCyc \subseteq [0..\maxcycle]^\alph$. 
Let $r \in \regf(\maxcycle,\regval)$.
Then there exists a matrix $M \in [0..\maxcycle]^\alph_\alph$
such that $S \cap \reg(r) = (W_1 + M \bbN^\alph) \cap \reg(r)$, where $W_1 \subseteq [0..\bigconst]^\alph$.
\end{lemma}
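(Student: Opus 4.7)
The plan is to take $M$ to have as columns $\alphs$ cycles $m_1, \ldots, m_\alphs \in \OCyc \cap \tau(r)$ chosen to be extreme generators of the rational cone $K_r := \osum\bbQp(\OCyc\cap\tau(r))$, padding with repeats if this cone has lower dimension. By hypothesis the entries of $M$ lie in $[0..\maxcycle]^\alph$. I will then show that every $v \in S \cap \reg(r)$ admits a decomposition $v = w_1 + Mn$ with $n \in \bbN^\alph$ and $w_1$ of size polynomial in $\maxcycle, \maxrun, \regval$; the set $W_1$ is simply the finite collection of residuals that appear. The backward inclusion is automatic since the columns of $M$ lie in $\OCyc$, so $M\bbN^\alph \subseteq \osum\bbN\OCyc$ and hence $W_1 + M\bbN^\alph \subseteq W + \osum\bbN\OCyc = S$.

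For the forward inclusion, fix $v \in S \cap \reg(r)$ with some representation $v = w + \sum_{c \in \OCyc} \alpha_c c$, and split $\OCyc = I \cup O$ into \emph{inner} cycles $I = \OCyc \cap \tau(r)$ and \emph{outer} cycles $O = \OCyc \setminus \tau(r)$. Inner cycles are handled by Lemma \ref{detlemma}: every $c \in I$ lies in $K_r$, so $(\det M)\, c$ is a nonnegative integer combination of the $m_i$ (nonnegativity uses that the $m_i$ are extremal in $K_r$); reducing each $\alpha_c$ modulo $\det M$ leaves a residual of total size polynomial in $\maxcycle$, while the bulk of the inner usage becomes columns of $M$ and contributes to $n$. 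Outer cycles are handled via Lemma \ref{separation}: each ray $\bbQp c_o$ with $c_o \in O$ meets $K_r$ only at $0$, which yields a separating linear functional $\Phi_{c_o}$, with polynomially bounded coefficients, such that $\Phi_{c_o} c_o < 0$ and $\Phi_{c_o} m \geq 0$ for $m \in K_r$. Combining the functionals $\Phi_{c_o}$ with the region constraints $\sgn(fv - fl) = r_{f,l}$ for $(f,l) \in F_\maxcycle \times L_\regval$ bounds the \emph{net} outer contribution $\sum_{c \in O} \alpha_c c$ polynomially, and its residue folds into $w_1$ as well.

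The main obstacle is precisely this bound on outer-cycle usage: since an outer cycle may be re-expressible modulo a combination of inner cycles, there is no per-cycle bound on the $\alpha_c$ for $c \in O$, and one must control only the net deviation of $v$ from the cone $\tau(r)$ once cancellations are accounted for. The delicate step is to exchange individual outer-cycle usages for inner combinations via Lemma \ref{detlemma} applied to integer identities of the form $(\det M)(c_o + \sum_i \lambda_i m_i) \in M\bbZ^\alph$, so that after cleanup only polynomially many outer-cycle occurrences survive, yielding the constant $\bigconst$ polynomial in $\maxcycle, \maxrun, \regval$ (and in the lower-numbered constants $\macco, \csep$) as required.
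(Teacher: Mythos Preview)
Your choice of $M$ is the wrong one, and this is a genuine gap rather than a detail. You take the columns of $M$ from $\OCyc\cap\tau(r)$, i.e.\ cycles pointing \emph{into} the region; the paper instead takes columns from $\OCyc$ so that the cone they generate \emph{contains} $\tau(r)$. These are opposite requirements, and only the second one works in general. Concretely: with $\alph=\{a,b\}$, $\OCyc=\{(1,0),(0,1)\}$, $W=\{0\}$ (so $S=\bbN^\alph$), take $r$ with $\tau(r)$ the diagonal ray $\{(t,t):t>0\}$ (this arises since $f(x,y)=x-y$ lies in $F_\maxcycle$). Then $\OCyc\cap\tau(r)=\emptyset$, your $K_r=\{0\}$, and your $M$ is degenerate or undefined; yet $S\cap\reg(r)$ contains all $(n,n)$ for $n$ large, so no bounded $W_1$ together with such an $M$ can work. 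Every cycle is ``outer'' here, and your separation functionals $\Phi_{c_o}$ carry no information (the constraint $\Phi_{c_o}\geq 0$ on $K_r=\{0\}$ is vacuous), so neither they nor the region constraints bound the outer contribution $\alpha(1,0)+\beta(0,1)$ as $\alpha=\beta\to\infty$. Your proposed ``exchange outer-cycle usages for inner combinations'' cannot run when there are no inner cycles, and even when $K_r\neq\{0\}$ your padded $M$ is degenerate so Lemma~\ref{detlemma} does not apply.

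The tool you are missing is Lemma~\ref{subtraction}. The paper's argument runs as follows: in the nontrivial case one has $\tau(r)\subseteq\osum\bbP\OCyc$, and one picks $M$ with columns in $\OCyc$ so that $\tau(r)\subseteq M\bbP^\alph$ (in the example above, $M$ has columns $(1,0),(0,1)$). For a large $v\in S\cap\reg(r)$, Lemma~\ref{subtraction} peels off pieces $\cyclemul t_i$ with $t_i\in\tau(r)\cap[0..\reggen]^\alph$ while staying inside $\reg(r)$; simultaneously one writes $v=w_0+\sum_Y n_Y Y+\sum_{Y\in P}m_Y Y$ with $P\subseteq\OCyc$, $|P|\le\alphs$, and all $m_Y$ large (as in Theorem~\ref{theoreg}). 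If some $t_i\in\osum\bbP P$ one subtracts $\cyclemul t_i$ on the $P$-side (so the remainder is still in $S$ --- this is also what secures the backward inclusion $W_1\subseteq S$, which you claimed was automatic but is not) and re-adds it via $M$ using $t_i\in\tau(r)\subseteq M\bbP^\alph$ together with Lemma~\ref{detlemma}; otherwise Lemma~\ref{separation} applied to $\{t_i\}$ versus $P$ (not to outer cycles versus $K_r$) gives a functional on which the two decompositions of the same $v$ disagree, a contradiction for $\bigconst$ large enough. Your inner/outer split never sets up this two-sided comparison.
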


\begin{theorem}\label{frameregion}
Let $G$ be a commutative grammar, and let $r$ be a region.

The intersection of $\prod(G) \cap \reg(r)$ is an intersection of $\reg(r)$
and a polynomial union of $\bigconst,\maxcycle$-frames, where
$\bigconst$ and $\maxcycle$ are single polynomial in $|G|$.
\end{theorem}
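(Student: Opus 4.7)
My plan is to generalise the two-letter argument of Theorem \ref{grammarnormalform}, using Lemma \ref{matrixform} as the $\alphs$-dimensional substitute for the extreme-cycle pair construction. The overall shape of the argument will be: decompose runs into skeletons and cycles, use Lemma \ref{matrixform} to rewrite each per-support output set as a single frame within $\reg(r)$, then collapse the resulting exponential union of frames to a polynomial one by aggregating supports that yield the same matrix.

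First I would decompose runs exactly as in Theorem \ref{grammarnormalform}. Every run splits as $D = D_2 + \sum_{C \in \Cyc_T} \alpha_C C$ with $D_2$ a skeleton, $T = \supp(D_2)$, and $\alpha_C \in \bbN$. By Lemma \ref{exponlem}, $\prod(D_2)$ and each $\prod(C)$ have norm at most single-exponential in $|G|$, so all cycle outputs lie in $[0..\maxcycle]^\alph$ and all skeleton outputs in $[0..\maxrun]^\alph$. For each support $T$, the set of run outputs with support exactly $T$ is then $W_T + \osum\bbN \OCyc_T$, where $W_T$ collects the skeleton outputs.

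Second I would invoke Lemma \ref{matrixform} region by region. For the fixed $r$ and each $T$ it supplies a matrix $M_T \in [0..\maxcycle]^\alph_\alph$ and a set $W_T' \subseteq [0..\bigconst]^\alph$ with
\[(W_T + \osum\bbN \OCyc_T) \cap \reg(r) = (W_T' + M_T \bbN^\alph) \cap \reg(r).\]
Taking the union over all supports yields
\[\prod(G) \cap \reg(r) = \reg(r) \cap \bigcup_T (W_T' + M_T \bbN^\alph),\]
which is already an intersection of $\reg(r)$ with a union of $\bigconst,\maxcycle$-frames --- but indexed by the exponentially many supports $T$.

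The remaining, and hardest, step is collapsing this to polynomial size by grouping supports according to their matrix. The claim to establish is that across all $T$ only polynomially many distinct $M_T$ arise. Once that is in hand, for each matrix $M$ I set $W_M = \bigcup_{T : M_T = M} W_T'$; since each $W_T' \subseteq [0..\bigconst]^\alph$, so is $W_M$, and $(W_M + M \bbN^\alph)$ remains a $\bigconst,\maxcycle$-frame, while the outer union ranges over polynomially many $M$. To bound the number of matrices I expect the columns of $M_T$ to be realisable as simple cycles from individual states of $T$, paralleling the ``$Y^a, Y^b$ come from single states, so there are $|S|^2$ pairs'' observation of Theorem \ref{grammarnormalform}; combined with the fact that only polynomially many per-state cycles can be extremal in the polyhedral cone $\tau(r)$ cut out by $F_\maxcycle$, this would give a bound of $|S|^{O(\alphs)}$ matrices. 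Making this precise requires looking inside the construction in Lemma \ref{matrixform} to see that its matrix can indeed be chosen from such a polynomial pool, and carefully tracking that the single-exponential bounds on $\maxcycle$, $\reggen$, and $\bigconst$ do not blow up further --- this is where the bulk of the technical work will live.
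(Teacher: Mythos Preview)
Your first two steps are fine, but the third --- collapsing the exponential union down to polynomially many matrices --- is where the real difficulty sits, and the paper sidesteps it rather than confronting it. Your bound on the number of $M_T$'s rests on two claims: that each column of $M_T$ is a cycle output from a single state, and that for each state only polynomially many of its cycle outputs can be ``extremal in $\tau(r)$''. The first is true (Lemma \ref{matrixform} draws columns from $\OCyc_T=\bigcup_{s\in T}\OCyc_s$), but the second is not established anywhere: each $\OCyc_s$ can have exponentially many elements, and Lemma \ref{matrixform} as stated offers no canonical or extremality-based selection of $M$. To make your route work you would have to reopen Lemma \ref{matrixform} and prove that its matrix can always be chosen from some fixed polynomial pool depending only on $r$ and on a tuple of $\alphs$ states --- which is essentially a second theorem.

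The paper's route is shorter: it achieves the polynomial index set \emph{before} applying Lemma \ref{matrixform}, not after. Using the Theorem \ref{theoreg} decomposition (every run output is a bounded base $\prod(D_0)$ plus a non-negative combination of at most $\alphs$ linearly independent cycle outputs, each coming from some state in $\supp(D_0)$), one takes $I$ to be the family of subsets of $S$ of size at most $\alphs$ --- already of size $|S|^{O(\alphs)}$ since $\alphs$ is fixed --- and lets $W_i$ collect the bounded bases $\prod(D_0)$ with $i\subseteq\supp(D_0)$. This gives $\prod(G)=\bigcup_{i\in I}\big(W_i+\osum\bbN{\OCyc_i}\big)$ with a polynomial $|I|$ from the outset. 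Lemma \ref{matrixform} is then applied once per $i\in I$, and no post-hoc collapse is needed. So the missing idea in your plan is not a refinement of the collapse step but rather the observation that indexing by $\alphs$-element subsets of states (instead of full supports) already suffices.
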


\begin{theorem}\label{equalthree}
Inclusion is \PiP-complete for fixed $\alph$.
\end{theorem}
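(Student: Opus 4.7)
The plan is to follow the blueprint of Theorem \ref{maintheorem}, but with Theorem \ref{frameregion} replacing Theorem \ref{grammarnormalform} (which fails for $\alphs \geq 3$ by Theorem \ref{counterthree}) and Lemma \ref{subtraction} playing the role of Lemma \ref{anglelemma}. \PiP-hardness is inherited directly from Theorem \ref{maintheorem}, since the reduction from semilinear set inclusion uses only two letters and additional letters in $\alph$ can be ignored.

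For the upper bound, I would split the question $\prod(G_1) \subseteq \prod(G_2)$ into a universal check over regions: the inclusion holds iff, for every $r \in \regf(\maxcycle,\regval)$, it holds on $\reg(r)$. By Theorem \ref{frameregion}, for $k=1,2$, $\prod(G_k) \cap \reg(r)$ equals $\reg(r)$ intersected with a polynomial union of frames $Z^k_i = W^k_i + M^k_i \bbN^\alph$ whose parameters $\bigconst$ and $\maxcycle$ are single-exponential in $|G|$. So it suffices to decide inclusion between two such polynomial unions of frames, restricted to a single region.

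The core step is a region-restricted analogue of Lemma \ref{formcollapse}: if inclusion fails inside $\reg(r)$, then it already fails at some $v \in \reg(r)$ with $||v||$ single-exponential in $|G|$. I would let $\cyclemul$ be a common multiple of the determinants of all $M^k_i$ arising from the two decompositions (polynomially many matrices with entries bounded by $\maxcycle$, so $\cyclemul$ remains single-exponential), apply Lemma \ref{subtraction} to reduce any $v \in \reg(r)$ to $v' \in \reg(r)$ with $||v'|| \leq \regorder$ and $v - v' \in \cyclemul \bbZ^\alph$, and then reuse the reasoning from Lemma \ref{formcollapse}: writing $v - w_0 = M^k_i \alpha$ and $v' - w_0 = M^k_i \alpha'$ for a representative $w_0 \in W^k_i$, the difference $\alpha - \alpha'$ is integer by Lemma \ref{detlemma} (because $\det M^k_i$ divides $\cyclemul$), while the sign pattern of $\alpha$ and $\alpha'$ coincides because the relevant linear functionals (the rows of the adjugate of $M^k_i$, scaled to integers) lie in $F_\maxcycle$ and hence have sign pinned down throughout $\reg(r)$. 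Degenerate $M^k_i$ are handled by the same trick as in Lemma \ref{formcollapse}.

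Once this collapse is available, the \PiP algorithm is a copy of the one from Theorem \ref{maintheorem}: the $\forall$-player picks a small $v$ together with a small witness (a run of single-exponential size, polynomially encoded via Proposition \ref{algoder}) for $v \in \prod(G_1)$, and the $\exists$-player supplies a small witness for $v \in \prod(G_2)$. The main obstacle, and the reason that Lemmas \ref{subtraction}, \ref{separation} and \ref{matrixform} together with the nested constants $\maxcycle$, $\regval$, $\reggen$, $\regorder$, $\bigconst$ are set up as they are, is precisely to make the sign-of-coordinate argument go through within a region: one must verify that the adjugate rows of every matrix produced by Theorem \ref{frameregion} are captured by $F_\maxcycle$, and that $L_\regval$ is coarse enough to locate every representative $w_0 \in W^k_i$ relative to the resulting hyperplanes. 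This bookkeeping, rather than any fundamentally new geometric idea, is expected to be the hardest part of the proof.
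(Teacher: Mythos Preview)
Your proposal is correct and follows exactly the paper's own approach: the paper's proof is the single sentence ``We apply the methods of Theorem \ref{maintheorem} separately for each region from Theorem \ref{frameregion},'' and you have accurately unpacked what this means, including the role of Lemma \ref{subtraction} as the higher-dimensional replacement for Lemma \ref{anglelemma} and the region-restricted analogue of Lemma \ref{formcollapse}. Your identification of the remaining bookkeeping (that the adjugate rows of the $M^k_i$ are captured by $F_\maxcycle$, and that $L_\regval$ suffices to pin down signs relative to the base points $w_0$) is exactly the technical residue the paper leaves implicit.
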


\begin{proof}[ of Theorem \ref{equalthree}]
We apply the methods of Theorem \ref{maintheorem} separately for each region from
Theorem \ref{frameregion}.
\qed\end{proof}

\section{Lower bounds}

For completeness, we provide proofs of lower bounds for the complexities of
considered problems. These results have been previously known (e.g.,
\cite{hyunh1}).

\begin{theorem}\label{membunfix}
For a commutative regular grammar 
$G$ over $\alph$ (whose size is not fixed), and $K \in \mult \alph$,
the problem of deciding whether $K \in \prod(G)$ is NP-complete.
\end{theorem}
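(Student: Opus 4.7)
The plan is to establish the two bounds separately, with the upper bound following almost immediately from Theorem \ref{theoreg} and the main work going into the hardness reduction.

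For the NP upper bound, I would apply Theorem \ref{theoreg} directly: $K \in \prod(G)$ is equivalent to the existence of a run $D$ with $||D|| = O(\stats^{2\alphs}\alphs!)$, at most $\alphs$ linearly independent simple cycles $C_1, \ldots, C_m \in \Cyc_{\supp(D)}$ (each of size at most $\stats$ by the ``short cycle'' remark in the proof of Theorem \ref{algoregmem}), and non-negative integers $\alpha^1, \ldots, \alpha^m$ satisfying $K = \prod(D) + \sum_i \alpha^i \prod(C_i)$. Even when $\alphs$ is not fixed, the bit length of $||D||$ is $O(\alphs \log(\alphs \stats))$, which is polynomial in the input, and the $\alpha^i$ can be bounded polynomially in $\log ||K||$ by a standard Cramer-rule argument applied to the non-degenerate linear system on the $\prod(C_i)$. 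A verifier then checks the Euler condition, connectedness from $s_0$, simplicity of each $C_i$, the linear independence of the $\prod(C_i)$, and the defining equation --- all polynomial-time tasks.

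For NP-hardness I would reduce from the classical NP-complete problem \emph{Exact Cover by 3-Sets} (X3C). Given an instance $(U, \mathcal{S})$ with $|U|=3k$ and $S_j = \{a_j, b_j, c_j\} \in \mathcal{S}$ for $j = 1, \ldots, \ell$, set $\alph = U$ and build $G$ as a DAG of ``set gadgets'' along fresh states $q_1, \ldots, q_{\ell+1}$, with $s_0 = q_1$. The $j$-th gadget offers two alternative paths from $q_j$ to $q_{j+1}$: a single ``skip'' transition producing no letter, or a three-step ``include'' chain $q_j \to r_{j,1} \to r_{j,2} \to q_{j+1}$ whose three transitions output the letters $a_j$, $b_j$, $c_j$ respectively. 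A final transition from $q_{\ell+1}$ with empty target ends the derivation. The target multiset is $K(u)=1$ for every $u \in U$. Because the grammar is acyclic, each run $D$ corresponds bijectively to a subcollection $\mathcal{S}' \subseteq \mathcal{S}$ (those whose include branch is chosen in the unique path), and the value of $\prod(D)$ at $u$ equals the number of $S_j \in \mathcal{S}'$ containing $u$. Thus $K \in \prod(G)$ iff $\mathcal{S}'$ is an exact cover of $U$.

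The main obstacle I anticipate is the polynomial-bit-length argument for the coefficients $\alpha^i$ in the NP upper bound: when $\alphs$ is unbounded and $K$ is given in binary, one must confirm that Theorem \ref{theoreg} can be supplemented with explicit Cramer-type bounds on the $\alpha^i$ in terms of $\log ||K||$, $\stats$, and $\alphs$, so that the entire witness remains of polynomial binary size. Once this is in place, both the verification for the upper bound and the correctness of the X3C reduction are essentially routine; the only further subtlety for the reduction is to respect the format restriction $|a|\leq 1$, $|t|\leq 1$, which the per-letter chain inside each include branch handles automatically.
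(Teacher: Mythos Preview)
Your proposal is correct, but both halves differ from the paper's route.

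For the upper bound the paper is much more direct: the run $D$ itself, encoded as a vector in $\bbN^\delta$ with binary entries, is the NP witness. The only danger is that $|D|$ blows up via transitions producing nothing, but any surplus of those contains a removable cycle, so one may assume $||D||$ is bounded polynomially in $|K|$ and $|G|$, hence the witness has polynomial bit length; verification is just the Euler and connectedness checks plus $\prod(D)=K$. Your detour through Theorem \ref{theoreg} works too, but it imports the skeleton/cycle machinery and forces you to bound the $\alpha^i$ via a Cramer-type argument --- extra effort the direct argument sidesteps entirely. (A small point: your verifier need not check simplicity of the $C_i$; it suffices that each $C_i$ is a cycle with $|C_i|\le\stats$ and $\supp(C_i)\cap\supp(D)\neq\emptyset$.)

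For hardness the paper reduces from \textsc{Hamiltonian Circuit} rather than X3C: given $(V,E)$ set $\alph=S=V$, add $\mv{v_1}{v_2}{v_2}$ for each edge and a terminating $\mv{s_0}{}{0}$, and ask whether the all-ones vector lies in $\prod(G)$. This is a touch slicker (no auxiliary states, the alphabet \emph{is} the vertex set), while your X3C gadget has the pleasant feature of yielding an acyclic grammar. Either reduction is fine.
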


\begin{proof}
The problem is obviously in NP (the run is the witness --- the only
problem is that it could be larger than polynomial by including
a large number of transitions which produce nothing, but such transitions
must form cycles which can be easily eliminated). We show a reduction from
the Hamiltonian
circuit problem. Let $(V,E)$ be a graph. We take $\alph = S = V$, and 
for each edge $(v_1, v_2)$ we add a transition $\mv{v_1}{v_2}{v_2}$.
We pick an initial state $s_0$ and add a final transition $\mv{s_0}{0}{0}$.
The graph $(V,E)$ has a Hamiltonian circuit iff $(1,1,\ldots) \in \prod(G)$.
\qed\end{proof}

The same example shows that disjointness is co-NP-hard for grammars over
alphabets of unfixed size. It is also co-NP-complete, since our polynomial
algorithm for fixed size alphabets can be easily modified to work in co-NP 
for unfixed size ones.

\begin{theorem}\label{membnonreg}
For a single letter alphabet $\alph = \{a\}$ and a commutative grammar 
$G$ (not necessarily regular), and $K \in \mult \alph$,
the problem of deciding whether $K \in \prod(G)$ is NP-complete.
\end{theorem}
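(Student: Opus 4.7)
The plan is to show NP membership by exhibiting a short witness and to prove NP-hardness by reduction from SUBSET SUM, using the classical doubling trick to encode exponentially large integers in a polynomially sized grammar.

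For NP membership I would reuse the witness model that the paper has already set up for context-free commutative grammars via Proposition \ref{algoder}: a run is just a multiset $D \in \mult\delta$ of transitions, and its validity reduces to the Euler condition $\source(D) = \target(D) + \{s_0\}$, connectedness of $\supp(D)$ from $s_0$, and $\prod(D) = K$. If $K$ has value $N$, then for each productive transition $D(\tau) \leq N$, and non-productive transitions can, without loss of generality, be eliminated (they form cycles whose removal preserves $\prod(D)$ and the run structure, just as in the proof of Theorem \ref{membunfix}). Thus each $D(\tau)$ fits in polynomially many bits, Euler and $\prod(D)=K$ become polynomial arithmetic tests on binary numbers, and connectedness is a polynomial graph test.

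For NP-hardness, reduce from SUBSET SUM: given $a_1,\ldots,a_n,T \in \bbN$ in binary, build a commutative grammar $G$ over $\alph = \{a\}$ such that $\{T\} \in \prod(G)$ iff some $I \subseteq \{1,\ldots,n\}$ satisfies $\sum_{i \in I} a_i = T$. The key idea is to build exponentially long unary strings using only logarithmically many nonterminals via doubling: introduce auxiliary nonterminals $P_{i,j}$ for $0 \leq j \leq \lceil \log_2 a_i \rceil$ with transitions $P_{i,0} \to a$ and $P_{i,j+1} \to \{P_{i,j},P_{i,j}\}$, so that $P_{i,j}$ derives $a^{2^j}$. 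Chain these (respecting $|t|\leq 2$ via intermediate states) into $Q_i$ deriving exactly $a^{a_i}$, by concatenating those $P_{i,j}$ with $j$ in the binary support of $a_i$. Finally add a selector chain $R_i \to \{Q_i, R_{i+1}\} \mid \{R_{i+1}\}$ and $R_{n+1} \to 0$, with start transition $s_0 \to R_1$. Each derivation of $G$ picks a subset $I \subseteq \{1,\ldots,n\}$ and outputs $a^{\sum_{i \in I} a_i}$, so $K = \{T\} \in \prod(G)$ iff SUBSET SUM has a YES answer. The total number of states is $O\left(\sum_i \log a_i\right)$, which is polynomial in the bit length of the input.

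The only mildly delicate step is making the construction respect the paper's restrictions $|a|\leq 1$ and $|t|\leq 2$ per transition: each ``concatenation'' in the definition of $Q_i$ and each selector rule producing two nonterminals alongside a terminal must be binarized by threading through fresh intermediate states. This adds only a constant blow-up per rule and does not affect polynomiality. Both matching hardness and containment are then in hand, yielding NP-completeness. \qed
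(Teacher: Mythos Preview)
Your approach is essentially the paper's: NP-membership via the run-as-multiset witness from Proposition \ref{algoder}, and NP-hardness by reducing SUBSET SUM (the paper calls it the knapsack problem) using the doubling trick to encode each $a_i$ with $O(\log a_i)$ nonterminals. Your construction with the $P_{i,j}$, $Q_i$, $R_i$ is exactly the grammar the paper only alludes to (``a grammar of size $O(\sum_i \log(k_i))$''), spelled out in full.

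One point in your membership argument deserves care. You write that ``non-productive transitions can, without loss of generality, be eliminated (they form cycles \ldots\ just as in the proof of Theorem \ref{membunfix}).'' That reasoning is fine for regular grammars, where a run is a path and excess non-productive steps pigeonhole into a cycle, but it does not transfer verbatim to context-free grammars: in a grammar like $S \to S_1S_1,\ S_1 \to S_2S_2,\ \ldots,\ S_{n-1}\to S_nS_n,\ S_n\to 0$, every transition is non-productive, there are no cycles at all, yet the unique run has $D(\tau_i)=2^i$. None of these transitions can be eliminated. What saves you is that each $D(\tau)$ is still only single-exponential in $|G|$, hence fits in polynomially many bits; more generally, decomposing $D$ as a skeleton plus at most one productive simple cycle (since $\alphs=1$) and invoking Lemma \ref{exponlem} bounds every $D(\tau)$ by $(N+1)\cdot 2^{|S|^{O(1)}}$, so the binary-encoded witness is polynomial. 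The paper's own ``for a similar reason'' glosses over the same subtlety.
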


\begin{proof}
The problem is in NP for a similar reason. We can reduce the knapsack
problem: given a sequence of positive integers $k_1, \ldots, k_m$ and
$K$, is there a subset $L \subseteq \{1, \ldots, m\}$ such that
$\sum_{l \in L} k_l = K$? Indeed, it is easy to produce a grammar
of size $O(\sum_i \log(k_i))$ which generates $Ka$ iff $K = \sum_{l \in L} k_l$
for some $L$.
\qed\end{proof}

\begin{theorem}\label{uninonreg}
Let $G$ be a commutative regular grammar over $\alph$ of fixed size.
Then the problem of deciding universality ($\prod(G) = \bbN^\alph$)
is coNP-hard even for $\alph = \{a\}$. 
\end{theorem}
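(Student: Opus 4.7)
The plan is to reduce 3-SAT to non-universality, which by NP-completeness of 3-SAT will give coNP-hardness of universality. Given a 3-CNF formula $\varphi$ on variables $x_1,\ldots,x_n$, I will construct in polynomial time a commutative regular grammar $G_\varphi$ over $\{a\}$ such that $\prod(G_\varphi) = \bbN$ if and only if $\varphi$ is unsatisfiable. The encoding is via residues modulo small primes: let $p_1 < \cdots < p_n$ be the first $n$ primes (each of size $O(n \log n)$ by the prime number theorem), and for $m \in \bbN$ define $\alpha_m(x_i) = 1$ if $m \equiv 1 \pmod{p_i}$ and $\alpha_m(x_i) = 0$ if $m \equiv 0 \pmod{p_i}$, calling $\alpha_m$ \emph{invalid} if neither holds for some $i$. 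By the Chinese remainder theorem every truth assignment is realised by some $m$, so $\varphi$ is satisfiable iff some $m$ is not \emph{bad}, where $m$ is bad when $\alpha_m$ is invalid or $\alpha_m$ falsifies $\varphi$.

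The grammar $G_\varphi$ branches non-deterministically from its initial state $s_0$ into several simple cycles, each accounting for a portion of the bad set. For each $i \in \{1,\ldots,n\}$ and each $k \in \{2,\ldots,p_i-1\}$, attach a cycle of length $p_i$ with a single accepting position at index $k$; this contributes exactly the residue class $\{m : m \equiv k \pmod{p_i}\}$. Summed over $i$ and $k$, these capture precisely those $m$ with $\alpha_m$ invalid. For each clause $C = (l_{j_1} \vee l_{j_2} \vee l_{j_3})$ of $\varphi$, use CRT to compute the unique $r_C \in [0,\,p_{j_1}p_{j_2}p_{j_3})$ making all three literals false, and attach a cycle of length $p_{j_1}p_{j_2}p_{j_3} = O(n^3 \mathrm{polylog}\,n)$ with a single accepting position at index $r_C$, capturing $\{m : m \equiv r_C \pmod{p_{j_1}p_{j_2}p_{j_3}}\}$. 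The total number of states is polynomial in $|\varphi|$, and by inspection $\prod(G_\varphi)$ is exactly the set of bad $m$, so $\prod(G_\varphi) = \bbN$ iff no $m$ encodes a satisfying assignment iff $\varphi$ is unsatisfiable.

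The only remaining work is routine bookkeeping: checking that the first $n$ primes are polynomial-time computable and fit in polynomial space, that $r_C$ is polynomial-time computable by standard CRT, and that the branching from $s_0$ into the several cycles (e.g.\ via transitions $s_0 \to 0 \cdot q_0^{(i)}$ consuming $s_0$ and entering the $i$-th cycle's entry state) respects the paper's normal form with $|a| \leq 1$ and $|t| \leq 1$. No step requires cleverness beyond the encoding idea itself; the main conceptual point is that CRT lets a single unary cycle simulate a three-variable conjunction of parity constraints with only a polynomial blow-up.
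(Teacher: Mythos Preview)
Your proposal is correct and follows essentially the same approach as the paper: encode truth assignments via residues modulo the first $n$ primes and attach, for each clause, a cycle of length $p_{j_1}p_{j_2}p_{j_3}$ whose exit positions pick out the residues falsifying that clause, so that $\prod(G_\varphi)$ is precisely the complement of the encodings of satisfying assignments. The only cosmetic difference is that you restrict variable values to residues $\{0,1\}$ and add explicit ``invalidity'' cycles for the remaining residues, whereas the paper allows arbitrary residue values and (implicitly) uses the observation that any satisfying assignment with values outside $\{0,1\}$ can be projected to a Boolean satisfying assignment.
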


\begin{proof}
The problem is in coNP because the witness for non-universality is
of polynomial length (by the same argument as in Theorem \ref{maintheorem}).

We reduce the 3CNF-SAT problem. Let $\phi = \bigwedge_{1\leq i\leq k} C_i$ be a
3CNF-formula with $n$ variables $x_1\ldots x_n$ (which can be 0 or 1)
and $k$ clauses. Let $p_1, p_2, \ldots, p_n$
be $n$ distinct prime numbers. Let $i \in [1..k]$. Suppose that clause $C_i$
is of form $\bigvee_{k\in[1..3]} x_{a_k}=v_{a_k}$. Our grammar will have states
$S^i_j$, where $0 \leq j < M_i = p_{a_1} p_{a_2} p_{a_3}$; we have cyclic
transitions $\mv{S^i_j}{a}{S^i_{(j+1) mod M_i}}$, and $\mv{S^i_j}{}{0}$ for each 
$j$ satisfying $\bigvee_{k\in[1..3]} j \bmod p_{a_k}=v_{a_k}$. We also have 
transitions $\mv{s_0}{}{S^i_0}$ for each $i$.

From simple number theoretic arguments we get
that $x \notin \prod(G)$ iff the formula $\phi$ is satisfied for $x_i = x
\bmod p_i$.\qed
\qed\end{proof}

\begin{corol}
Disjointness is coNP-complete for commutative context-free grammars
over $\alph$ of fixed size, and universality, equivalence, and inclusion
are coNP-complete for commutative regular grammars over $\alph$ of fixed
size.
\end{corol}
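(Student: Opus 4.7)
My plan is to assemble the four coNP-completeness assertions from short reductions for the lower bounds and small-witness arguments for the upper bounds, reusing the membership algorithms and frame normal forms already proved in the paper.

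For hardness, I start from Theorem \ref{uninonreg}, which gives coNP-hardness of universality for regular commutative grammars over the single-letter alphabet (and hence over any fixed alphabet). To propagate this to inclusion and equivalence for regular grammars over fixed $\alph$, I introduce a trivial one-state regular commutative grammar $G_{\rm triv}$ that has a self-loop producing each letter together with a zero-producing final transition, so that $\prod(G_{\rm triv}) = \bbN^\alph$; then $G$ is universal iff $\prod(G_{\rm triv}) \subseteq \prod(G)$, and also iff $\prod(G) = \prod(G_{\rm triv})$. For coNP-hardness of disjointness for context-free commutative grammars over fixed $\alph$, I reduce from the NP-hard membership problem of Theorem \ref{membnonreg}: given $(H, K)$ with $K$ in binary, build a polynomial-size context-free commutative grammar $G_K$ with $\prod(G_K) = \{K\}$ by a standard binary-doubling construction per letter coordinate, so that $K \in \prod(H)$ iff $\prod(H) \cap \prod(G_K) \neq \emptyset$.

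For the upper bounds, I exploit small-witness arguments supported by the frame normal forms. For regular grammars over fixed $\alph$, Theorem \ref{theoreg} (backed by the polynomial skeleton/cycle bounds of Lemma \ref{fatedge}) expresses $\prod(G)$ as a polynomial union of $A,B$-frames with polynomial $A$ and $B$, so an $|\alph|$-dimensional version of Lemma \ref{formcollapse} produces polynomially-bounded witnesses whenever universality or inclusion fails; non-membership of such a witness is decided in polynomial time by Theorem \ref{algoregmem}, putting non-universality and non-inclusion in NP, and equivalence reduces to two-way inclusion. For context-free disjointness over fixed $\alph$, Theorem \ref{grammarnormalform} (for $|\alph|=2$) and Theorem \ref{frameregion} (general fixed) write each $\prod(G_i)$ as a polynomial union of $A,B$-frames with singly exponential $A, B$; whenever two such frames intersect, solving the underlying linear Diophantine system gives a common point of singly exponential magnitude and hence polynomial binary length, which I pair with polynomial-size derivation certificates in both grammars (via Proposition \ref{algoder}) to place non-disjointness in NP.

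The main obstacle will be controlling the size of the witness for context-free disjointness: runs in context-free commutative grammars can already carry exponential multiplicities, and a priori the smallest common Parikh image might blow up further. The frame normal form is precisely what tames this --- the intersection of two $A,B$-frames is governed by an affine lattice, so if it is non-empty it already contains a singly-exponential representative. Everything else (the reductions, the trivial grammar $G_{\rm triv}$, the singleton grammar $G_K$, and the passage from a small witness to a polynomial-time non-membership check) is routine; once the witness-size bound is in place, the corollary assembles from existing ingredients of the paper.
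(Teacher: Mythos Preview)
Your proposal is correct and follows essentially the same route as the paper: the hardness reductions (from Theorem~\ref{membnonreg} for disjointness and from Theorem~\ref{uninonreg} for universality, then pushed to inclusion and equivalence by comparing against a trivial grammar) are exactly the ones the paper has in mind, and the upper bounds are obtained from the same small-witness machinery built on the frame normal forms together with Theorem~\ref{algoregmem} for the regular case.

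One stylistic difference is worth noting. For the coNP upper bound on regular inclusion/equivalence you argue \emph{from scratch} that regular grammars yield a polynomial union of $A,B$-frames with polynomial $A,B$ (via Lemma~\ref{fatedge} and Theorem~\ref{theoreg}) and then invoke a $|\alph|$-dimensional collapse lemma. The paper instead simply reuses the \PiP\ algorithm for context-free inclusion (Theorems~\ref{maintheorem}/\ref{equalthree}) and observes that the inner existential quantifier (membership in $G_2$) becomes a polynomial-time test by Theorem~\ref{algoregmem}, which immediately drops the complexity from \PiP\ to coNP. Both arguments are sound; the paper's is shorter because it avoids re-establishing any frame bounds for the regular case. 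Conversely, your route makes explicit that in the regular case the frame parameters are polynomial rather than exponential, which is informative even if not needed here. For context-free disjointness you invoke Theorem~\ref{frameregion}, but in fact you only need that skeletons and simple cycles are singly exponential (Lemma~\ref{exponlem}); the region decomposition is required for inclusion, not for finding a single common point, so your argument there is a bit heavier than necessary but still correct.
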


\begin{proof}
We get that disjointness and universality for commutative grammars are
coNP-hard from Theorems \ref{membnonreg} and \ref{uninonreg}, respectively.
We get the upper bounds by applying the same methods as in Theorem
\ref{equalthree} (or the easier Theorem
\ref{maintheorem} for alphabets of size 2).
In the case of equality and inclusion for regular grammars, we get rid of one
level of the polynomial hierarchy by using Theorem \ref{algoregmem} to
decide membership.
\qed\end{proof}

\section{Conclusion}

We have shown tight complexity bounds for the problems of membership,
inclusion (equality), and disjointness of Parikh images of
regular and context-free languages over alphabets of fixed size.

What about alphabets of unbounded size? Some of the problems here remain
open; we do not know whether our results and methods shed much light on
these problems. For example, as far as we know, equality of Parikh images
of both regular and context-free commutative languages (over alphabets of
unfixed size) is only known to be \PiP-hard and in coNEXPTIME \cite{semipi2}.
Also, for the universality problem for commutative context-free grammars
over alphabets of fixed size, our bounds are not tight: we know
that this problem is in \PiP (as a special case of inclusion), but the only
lower bound known to us is coNP (from the regular version).

In some places in our paper, it was convenient to use grammars which could
produce negative quantities of letters (Theorem \ref{algoregdis}), or even
negative quantities of states (Theorem \ref{counterthree}). It is interesting
whether there exists some more general theory for such techniques.

Many thanks to Sławek Lasota for introducing me to these problems, and to
everyone on our Automata Scientific Excursion for the great atmosphere of
research.

\twocolumn

\appendix

\section{Equality of context-free grammars for a fixed $d>2$ --- proof details}

\begin{proof}[ of Lemma \ref{subtraction}]
Let $r \in \regf$.

For all regions $\reg(r)$ bounded by some $M$ (which must be polynomial),
we can take $\regorder = M, \reggen = 0$.

Let $v_i$ be a sequence of elements of $\reg(r)$ such that $\lim_{i\ra \infty} |v_i| = \infty$.
Let $w_i = v_i / |v_i|$. Let $w$ be a cluster point of $w_i$. We have
$w \in \tau(r)$. Thus, $\tau(r) \cap J_\alph$ is non-empty.

Since $\tau(r)$ is given by linear equations, we get that $\tau(r) \cap J_\alph$
is a polytope whose vertices (black points where edge cross in Figure A)
are $t_1, \ldots, t_D$, where
$t_i \in [0..\rgden]^\alph / q_i$, where $q_i \in [1..\rgden]$. 
Both $\rgden$ and $D$ are bounded polynomially.

Let 
$R^*$ be the set of points $v \in \Reg(r)$ which cannot be written
as $v_0+t$, where $t \in \tau(r)$ and $v_0$ is also in $\Reg(r)$. 
It can be easily seen that $\Reg(r) = R^* + \tau(r)$, 
and also that $R^*$ is bounded polynomially by $M$.
                                                                    
We will show that our claim is satisfied for $\reggen = \rgden^2$ and
$\regorder = M + \cyclemul \rgden D$.
Let $v \in \reg(r), ||v|| \geq \regorder$.
Since $v \in \reg(r) \subseteq \Reg(r)$, 
we have $v = v_{00} + \sum_i \alpha_i t_i$, where $\alpha_i \geq 0$ and
$v_{00} \in R^*$.

Since $||v|| > \regorder$, there must be $i$ such that
$\alpha_i > \cyclemul \rgden$. Thus, $\alpha_i > \cyclemul q_i$.
We get our form: $v = v_0 + \cyclemul q_i t_i$, where $v_0$ is also in
$\reg(r)$.
\qed\end{proof}

\begin{proof}[ of Lemma \ref{separation}]
Let $||x||_2$ denote the Euclidean norm of $x \in \bbR^\Sigma$.
Let $X = (J_\Sigma \cap \oplus\bbP P) - (J_\Sigma \cap \oplus\bbP Q)$;
since $\oplus\bbP P$ and $\oplus\bbP Q$ are disjoint, $0 \neq X$.
Let $x \in X$ be the point of $x$ such that
$||x||_2 = \min \{||x||_2: x \in X\}$.

Now, let $x_p, x_q \in J_\Sigma$ be the points such that $x_p - x_q = x$.
Let $\Phi \in \bbP_\Sigma^1$ be such that $\Phi(x_p) = 1$, $\Phi(x_q) = -1$,
and $\Phi(z) = 0$ for all $z \in J_\Sigma$ such that $||z-x_p||_2 = ||z-x_q||_2$.

This $\Phi$ satisfies our conditions. We omit the proof that this construction
indeed works, and that $||\Phi||$ is bounded polynomially.

\end{proof}

\begin{proof}[ of Lemma \ref{matrixform}]

Let $\cyclemul$ be the bound on $|\det M|$ for $M \in [0..\maxcycle]^\alph_\alph$.

Let $\reggen$ and $\regorder$ be from Lemma \ref{subtraction} (for our
$\maxcycle$, $\regval$, and $\cyclemul$).

Let $\csep$ be from Lemma \ref{separation} (for our $\reggen$).

Let $\mulgen$ be such that for each $v \in [0..\csep]^\alph$, and each
$P \subseteq [0..\maxcycle]^\alph$, if $v \in \osum\bbP P$, then $Xv \in \osum{[0..\mulgen]} P$
for some $X \in [0..\cyclemul]$.

Let $\bigconst$ be big enough.

Let $r \in \regf(\maxcycle,\regval)$. For $r$ such that $\reg(r)$ is bounded
(by $M_0$),
$M = 0$ and $\bigconst = M_0$. Thus, assume then $\reg(r)$ is unbounded.

Let $H = \{w/|w|: w \in [0..\maxcycle]^\alph\}$.
$\OCyc \bbP^\alph \cap J_\alph$ is a convex polytope with vertices from $H$.
On the other hand, $\tau(r) \cap J_\alph$ is a convex polytope bounded by hyperplanes
going through sets of $\alphs-1$ vertices from $H$; moreover, it is a
minimal such polytope, i.e.,~it cannot be subdivided into two such polytopes
of the same dimension by such a hyperplane. Thus, either
$\int(\tau(r)) \cap J_\alph$
is disjoint with $\OCyc \bbP^\alph \cap J_\alph$ (case 1),
or $\tau(r) \cap J_\alph$
is a subset of $\OCyc \bbP^\alph \cap J_\alph$ (case 2).

In the case (1), there must be a hyperplane separating $\tau(r) \cap J_\alph$
and $\OCyc \bbP^\alph \cap J_\alph$. Let $f \in F_\maxcycle$ be such that
$f(\tau(r)) \geq 0$, $f(t) > 0$ for some $t \in \tau(r)$, and
$f(\OCyc) \leq 0$. Since $f(t) > 0$, we have $r(f,l) = 1$ for all
$l \in L_\regval$. Thus, for $v \in S \cap \reg(r)$, we have 
$f(v) > f(w_0)$ for all $w_0 \in [0..\regval]^\alph$. On the other hand,
for some $w_0$ we have $v = w_0 + \sum_{Y \in \OCyc} n_Y Y,$ thus
$f(v) \leq f(w_0)$. A contradiction. Thus, $S \cap \reg(r) = \emptyset$.

In the case (2), there must be a matrix $M$, whose columns are
$\alphs$ elements of $\OCyc$, such that $\tau(r) \subseteq M \bbP^\alph$.

Let $v \in S \cap \reg(r)$. We prove inductively by $v$.

If $||v|| \leq \bigconst$, we are ready.

Otherwise, using Lemma \ref{subtraction} iteratively, we write
$v$ as $v_0 + \cyclemul (t_1 + \ldots + t_K)$, where $t_i \in [0..\reggen]^\alph$, and
$||v_0|| \leq \regorder$. We have $K > (\bigconst-\regorder) / \cyclemul\reggen$.

On the other hand, we can write $v$ as $w_0 + \sum_{Y \in \OCyc} n_Y Y
+ \sum_{Y \in P} m_Y Y$, where $w_0 \in [0..\maxrun]^\alph$, $n_Y \leq \mulgen$,
$m_Y \geq \mulgen$, $|P| \leq \alphs$. (We get this form just like in the proof
of Theorem \ref{theoreg}.)

If for some $i$ we have $t_i \in \osum\bbP P$, then we are done. Indeed,
from
definition of $\mulgen$ we have that $\cyclemul t_i = \sum_{Y \in P} \alpha_Y Y$,
where $\alpha_Y < \mulgen$. On the other hand, $\cyclemul t_i = \sum M^i \beta_i$,
$\beta_i \in \bbN$.
From the induction hypothesis we can present $v-\cyclemul t_i$ in our form $F$.
Thus we can also present $v$ as $F + \sum_i M^i \beta_i$.

Now, what if $t_i \notin \osum\bbP P$? From Lemma \ref{separation}, let
$\Phi$ be such that $\Phi(t_i) < 0$, $\Phi(P) > 0$, $||\Phi|| < \csep$.
We have:

\begin{eqnarray*}
\Phi(v) &=& \Phi(v_0) + \cyclemul  \sum_i \Phi(t_i) < \regorder \csep - \cyclemul K \\
\Phi(v) &=& \Phi(w_0) + \sum_{Y\in \OCyc} n_Y Y + \sum_{Y\in P} m_Y Y > \\
&&
- \maxrun \csep d - |\OCyc| \mulgen \maxcycle \csep d + \\ &&
+ ((\bigconst - \maxrun + |\OCyc| \mulgen \maxcycle) / \maxcycle)
\end{eqnarray*}

This is a contradiction for $\bigconst$ big enough.

\qed\end{proof}

\begin{proof}[ of Theorem \ref{frameregion}]

Let $\maxcycle$ be the bound on the size of a simple cycle,
i.e.,~$\OCyc_S \subseteq [0..\maxcycle]^\alph$ (Lemma \ref{exponlem}).

Let $\maxrunp$ be such that for each run $D$ we have 
$\prod(D) = \prod(D_0) + \sum_{Y \in P} n_Y Y$, where $||D_0|| \leq \maxrunp$,
and $P$ is a subset of $\OCyc_{\supp D}$ of size $\alphs$. (We get this
form and a polynomial bound for $\maxrunp$ just like in the proof of Theorem
\ref{theoreg}.)

Let $I$ be the set of
all subsets of $S$ containing at most $\alphs$ elements. 
For $i \in I$, we can create
$W_i \subseteq [0..\maxrun]^\alph$ so that
$\prod(G) = \bigcup_{i\in I} W_i + \osum \bbN \OCyc_i$.
The method is similar to the one used in the proof of Theorem
\ref{grammarnormalform}.

Use constants just like in Lemma \ref{matrixform}.

Applying Lemma \ref{matrixform} to each component of the union, we get
that for each $r$,
$\reg(r) \cap \prod(G) = \reg(r) \cap \bigcup_{i \in I} W_i' + M_i \bbN^\alph$.
\qed\end{proof}

\end{document}